\newcommand\NoDo{\renewcommand\algorithmicdo{}}
\newcommand\ReDo{\renewcommand\algorithmicdo{\textbf{do}}}
\newtheorem{theorem}{Theorem}
\newtheorem*{theorem*}{Theorem}
\newtheorem{lemma}{Lemma}
\newtheorem{claim}{Claim}
\newtheorem{result}{Result}
\theoremstyle{definition}
\newtheorem{definition}{Definition}
\newcommand{\hide}[1]{}
\newcommand{\s}[1]{\mathsf{#1}}
\newcommand{\mBB}{\textrm{mBB}}
\newcommand{\x}{\mathbf{x}}
\newcommand{\y}{\mathbf{y}}
\newcommand{\p}{\mathbf{p}}
\newcommand{\N}{\mathbb N}
\newcommand{\R}{\mathbb R}
\newcommand{\Rp}{\mathbb{R}^+}
\newcommand{\poly}[1]{\mathsf{poly}(#1)}
\title{Fair and Efficient Allocations of Chores under Bivalued Preferences\thanks{Work on this paper is supported by NSF Grant CCF-1942321 (CAREER)}} 
\author{Jugal Garg\footnote{University of Illinois at Urbana-Champaign, USA} \\
\texttt{\small jugal@illinois.edu} 
\and
Aniket Murhekar\footnote{University of Illinois at Urbana-Champaign, USA}\\
\texttt{\small aniket2@illinois.edu}
\and
John Qin\footnote{University of Illinois at Urbana-Champaign, USA}\\
\texttt{\small johnqin2@illinois.edu}
}
\date{}
\begin{document}
\renewcommand{\arraystretch}{1.2}
\maketitle

\begin{abstract}
We study the problem of fair and efficient allocation of a set of indivisible \textit{chores} to agents with additive cost functions. We consider the popular fairness notion of envy-freeness up to one good (EF1) with the efficiency notion of Pareto-optimality (PO). While it is known that an EF1+PO allocation exists and can be computed in pseudo-polynomial time in the case of goods, the same problem is open for chores.

Our first result is a strongly polynomial-time algorithm for computing an EF1+PO allocation for \textit{bivalued instances}, where agents have (at most) two disutility values for the chores. To the best of our knowledge, this is the first non-trivial class of indivisible chores to admit an EF1+PO allocation and an efficient algorithm for its computation. 

We also study the problem of computing an envy-free (EF) and PO allocation for the case of divisible chores. While the existence of an EF+PO allocation is known via competitive equilibrium with equal incomes, its efficient computation is open. Our second result shows that for bivalued instances, an EF+PO allocation can be computed in strongly polynomial-time.
\end{abstract}

\section{Introduction}
The problem of fair division is concerned with allocating items to agents in a \textit{fair} and \textit{efficient} manner. Formally introduced by Steinhaus \cite{steinhaus}, fair division is an active area of research studied across fields like computer science and economics. Most work has focused on the fair division of \textit{goods}: items which provide non-negative \textit{value} (or utility) to the agents to whom they are allocated. However, several practical scenarios involve \textit{chores} (or bads). Chores are items which impose a \textit{cost} (or disutility) to the agent to whom they are allocated. For instance, household chores such as cleaning and cooking often need to be fairly distributed among members of the household. Likewise, teachers have to divide teaching load, stakeholders have to divide liabilities upon dissolution of a firm, etc. These examples highlight the importance of allocating \textit{chores} in a fair and efficient manner. Agencies responsible for designing such allocations must take into account the differences in preferences of agents in order for the allocation to be acceptable to all those involved. 

Arguably, the most popular notion of fairness is envy-freeness (EF) \cite{foleyEF, VARIAN1974-efpo}, which requires that every agent weakly prefers the bundle of items allocated to them over the bundle of any other agent. When items are \textit{divisible}, i.e., can be shared among agents, EF allocations are known to exist. However, in the case of \textit{indivisible} items, EF allocations need not exist. For instance, while dividing one chore between two agents, the agent who is assigned the chore will envy the other. Since the fair division of indivisible items remains an important problem, several relaxations of envy-freeness have been defined, first in the context of goods, and later adapted to chores. 

A widely studied relaxation of envy-freeness is \textit{envy-freeness up to one item} (EF1), defined by Budish~\cite{budish2011approxCEEI} in the context of goods. For chores, an allocation is said to be EF1 if for every agent, the envy disappears after removing one chore assigned to her. It is known that an EF1 allocation of chores exists and can be efficiently computed~\cite{lipton,bhaskar2020chores}. However, an EF1 allocation may be highly inefficient. Consider for example two agents $A_1$ and $A_2$ and two chores $j_1$ and $j_2$ where $A_i$ has almost zero cost for $j_i$ and high cost for the other chore. The allocation in which $j_1$ is assigned to $A_2$ and $j_2$ is assigned to $A_1$ is clearly EF1. However both agents incur high cost, which is highly inefficient. The allocation in which $j_i$ is assigned to $A_i$ is more desirable since it is both fair as well as efficient.

The standard notion of economic efficiency is Pareto optimality (PO). An allocation is said to be PO if no other allocation makes an agent better off without making someone else worse off. Fractional Pareto optimality (fPO) is a stronger notion requiring that no other fractional allocation makes an agent better off without making someone else worse off. Every fPO allocation is therefore PO, but not vice-versa.

An important question is whether the fairness and efficiency notions of EF1 and PO (or fPO) can be achieved in conjunction, and if so, can they be computed in polynomial-time. For the case of goods, Barman et al.~\cite{Barman18FFEA} showed that an EF1+PO allocation exists and can be computed in pseudo-polynomial time. Improving this result, Garg and Murhekar~\cite{murhekar2021aaai} showed that an EF1+fPO allocation can be computed in pseudo-polynomial time. For the case of chores, it is unclear whether EF1+PO allocations even exist, except for simple cases like identical valuations. Settling the existence of EF1+PO allocations (and developing algorithms for computing them) has turned out to be a challenging open problem.

In this paper, we present the first non-trivial results on the EF1+PO problem for chores. We study the class of \textit{bivalued} instances, where there are only two costs, i.e., for every agent a chore costs either $a$ or $b$, for two non-negative numbers\footnote{\label{footnote:binary}In fact, we can assume the two values are positive, since one of them being zero implies the setting is \textit{binary}, in which case computing an EF1+PO allocation is trivial by first assigning chores to agents which have 0 cost for them, and then allocating almost equal number of chores of cost 1 to everyone.} $a$ and $b$. Bivalued instances are a well-studied class in the fair division literature, we list several works in Section~\ref{sec:relatedwork}. In particular, Amanatidis et al.~\cite{amanatidis2020mnwefx} showed that allocations that are envy-free upto \textit{any} good (EFX), which generalizes EF1, can be efficiently computed for bivalued goods. Recently, Garg and Murhekar~\cite{garg2021sagt} showed that EFX+fPO allocations can be computed in polynomial-time for bivalued goods. Showing positive results for bivalued chores, our first result is:

\begin{result}
For bivalued instances with $n$ agents and $m$ indivisible chores, an EF1+fPO allocation exists and can be computed in $\poly{n,m}$-time.
\end{result}

Next, we study the problem of computing an EF+PO allocation of divisible chores. For goods, it is known that an EF+PO allocation always exists~\cite{VARIAN1974-efpo} and is in fact polynomial-time computable via the Eisenberg-Gale convex program~\cite{nisan2007agtbook}. This is done by computing the competitive equilibrium with equal incomes (CEEI). Here, the idea is to provide each agent with the same amount of fictitious money, and then find prices and an allocation of items such that all items are completely bought and each agent buys her most preferred bundle subject to her budget constraint. This is an example of a \textit{market} where demand (of agents) equals supply (of items), and is known as the Fisher market. For goods, there are polynomial-time algorithms for computing the competitive equilibrium (CE)~\cite{devanur2008mkteqb,orlin2010fisher,Vegh16}. For chores, the problem is harder: Bogomolnaia et al.~ \cite{bogomolnaia2017manna} showed that the CE rule can be non-convex, multi-valued and disconnected. Algorithms with exponential run-times are known for computing CE for chores~\cite{branzei2019choresceei,GargM20,chaudhury2020manna,GargHMS21}, but designing a polynomial-time algorithm is an open problem. Working towards this goal, our second result shows:  

\begin{result}
For bivalued instances with $n$ agents and $m$ divisible chores, an EF+fPO allocation can be computed in $\poly{n,m}$-time.
\end{result}

\subsection{Further related work}\label{sec:relatedwork}

Barman et al.~\cite{Barman18FFEA} showed that for $n$ agents and $m$ goods, an EF1+PO allocation can be computed in time $\poly{n,m,V}$, where $V$ is the maximum utility value. Their algorithm first perturbs the values to a desirable form, and then computes an EF1+fPO allocation for the perturbed instance, which for a small-enough perturbation is EF1+PO for the original instance. Their approach is via \textit{integral market equilibria}, which guarantees fPO at every step, and the concept of price-envy-freeness up to one good (pEF1) which is a strengthening of EF1. Using similar tools, Garg and Murhekar~\cite{murhekar2021aaai} showed that an EF1+fPO allocation can be computed in $\poly{n,m,V}$-time. They also showed that an EF1+fPO allocation can be computed in $\poly{n,m}$-time for $k$-ary instances (agents have at most $k$ values for the goods) where $k$ is a constant. It may seem a natural idea to try and use these approaches for chores, however they do not extend easily. While our algorithm also uses integral market equilibria to obtain the fPO property and pEF1 for chores to argue about EF1, our algorithm and its analysis is much more involved and significantly different from previous works.

Bivalued preferences are a well-studied class in literature. The following results are for the goods setting. Aziz et al.~\cite{aziz2019PO} showed PO is efficiently verifiable for bivalued instances and coNP-hard for 3-valued instances; Aziz~\cite{aziz2020hyllandzeckhauser}, and Vazirani and  Yannakakis~\cite{vazirani2020} studied the Hylland-Zeckhauser scheme for probabilistic assignment of goods in bivalued instances; and Bogomolnaia and Moulin~\cite{bogomolnaia04dichotomous} studied matching problems with bivalued (dichotomous) preferences. More generally, instances with few values have also been studied: Barman et al.~\cite{barman2018binarynsw} showed that EF1+PO allocations can be computed for binary valuations; Babaioff et al.~\cite{babaioff21dichotomous} studied truthful mechanisms for dichotomous valuations; Golovin~\cite{Golovin2005MaxminFA} presented approximation algorithms and hardness results for computing max-min fair allocations in 3-valued instances;Bliem et al.~\cite{bliem2016param} studied fixed-parameter tractability for computing EF+PO allocations with parameter $n+z$, where $z$ is the number of values; and Garg et al.~\cite{garg09papers} studied leximin assignments of papers ranked by reviewers on a small scale, in particular they present an efficient algorithm for 2 ranks, i.e., ``high or low interest'' and show NP-hardness for 3 ranks. Such instances have also been studied in resource allocation contexts, including makespan minimization with 2 or 3 job sizes~\cite{woeginger,chakrabarty-makespan}.

The fairness notion of equitability requires that each agent get the same amount of utility or disutility.  Similar to EF1 and EFX, equitability up to one (resp. any) item (EQ1 (resp. EQX)) are relaxations of equitability. Using approaches inspired by \cite{Barman18FFEA}, pseudo-polynomial time algorithms for computing EQ1+PO allocations were developed for both goods~\cite{freeman2019eqxpo} and chores~\cite{freeman2020chores}. For bivalued goods, an EQX+PO allocation is polynomial time computable \cite{garg2021sagt}.

\section{Preliminaries}\label{sec:prelim}

\paragraph{Problem instance.} 
A \textit{fair division instance} (of chores) is a tuple $(N,M,C)$, where $N = [n]$ is a set of $n\in\N$ agents, $M = [m]$ is a set of $m\in\N$ indivisible chores, and $C = \{c_1,\dots,c_n\}$ is a set of cost or disutility functions, one for each agent $i\in N$. Each cost function $c_i : M \rightarrow \mathbb{R}_{\ge 0}$ is specified by $m$ numbers $c_{ij}\in\R_{\ge0}$, one for each chore $j\in M$, which denotes the cost agent $i$ has for performing (receiving) chore $j$. We assume that the cost functions are additive, that is, for every agent $i \in N$, and for $S \subseteq M$, $c_i(S) = \sum_{j\in S} c_{ij}$. For notational ease, we write $c(S\setminus j)$ instead of $c(S\setminus \{j\})$.

We call a fair division instance $(N, M, C)$ a \textit{bivalued instance} if there exist $a,b\in \R_{\ge 0}$, with $a\ge b$, such that for all $i\in N$ and $j\in M$, $c_{ij} \in \{a,b\}$. That is, the cost of any chore to any agent is one of at most two given numbers. By scaling the costs, we can assume w.l.o.g. for bivalued instances that all costs $c_{ij} \in \{1,k\}$, where $k = a/b \ge 1$. Such a scaling is w.l.o.g., since our fairness and efficiency properties are scale-invariant. Given such an instance, we partition the set of chores into sets of low-cost chores $M_{low}$ and high-cost chores $M_{high}$:
\begin{itemize}
    \item $M_{low} = \{j\in M: \exists  i\in N \, \text{ s.t. } c_{ij} = 1 \}$, and
    \item $M_{high} = \{j\in M:\forall i\in N, \, c_{ij} = k \}$.
\end{itemize}
Further, we assume that both $a,b > 0$, since if $b=0$ then re-scaling the values transforms the instance into the simpler binary case, for which efficient algorithms are known (Footnote~\ref{footnote:binary}). We can additionally assume for bivalued instances that for every agent $i$, there is at least one chore $j$ s.t. $c_{ij} = 1$. This is w.l.o.g., since if $c_{ij}=k$ for all $j\in M$, then we can re-scale costs to set $c_{ij} = 1$ for all $j\in M$.
    
\paragraph{Allocation.} An \textit{allocation} $\x$ of chores to agents is an $n$-partition $(\x_1, \dots, \x_n)$ of the chores, where agent $i$ is allotted $\x_i \subseteq M$ and gets a total cost of $c_i(\x_i)$. A \textit{fractional allocation} $\x \in [0,1]^{n\times m}$ is a fractional assignment such that for each chore $j\in M$, $\sum_{i\in N} x_{ij} = 1$. Here, $x_{ij}\in[0,1]$ denotes the fraction of chore $j$ allotted to agent $i$.

\paragraph{Fairness notions.} An allocation $\x$ is said to be \textit{envy-free up to one chore} (EF1) if for all $i,h \in N$, there exists a chore $j\in \x_i$ such that $c_i(\x_i\setminus j) \le c_i(\x_h)$.

We say that an agent $i$ \textit{EF1-envies} an agent $h$ if for all $j\in \x_i$, $c_i(\x_i \setminus j) > c_i(\x_h)$, i.e., the EF1 condition between $i$ and $h$ is violated. 

A (fractional) allocation $\x$ is said to be \textit{envy-free} if for all $i,h\in N$, $c_i(\x_i)\le c_i(\x_h)$. We say that an agent $i$ \textit{envies} an agent $h$ if $c_i(\x_i) > c_i(\x_h)$, i.e., the EF condition between $i$ and $h$ is violated. 

\paragraph{Pareto-optimality.} An allocation $\y$ dominates an allocation $\x$ if $c_i(\y_i) \le c_i(\x_i), \forall i$ and there exists $h$ s.t. $c_h(\y_h) < c_h(\x_h)$. An allocation is said to be \textit{Pareto optimal} (PO) if no allocation dominates it. Further, an allocation is said to be fractionally PO (fPO) if no fractional allocation dominates it. Thus, a fPO allocation is PO, but not vice-versa.

\paragraph{Fisher markets.} A Fisher market or a \textit{market instance} is a tuple $(N,M,C,e)$, where the first three terms are interpreted as before, and $e=\{e_1,\dots,e_n\}$ is the set of agents' \textit{minimum payments}, where $e_i \ge 0$, for each $i\in N$. In this model, chores can be allocated fractionally. Given a payment vector, also called a \textit{price}\footnote{We refer to payments as prices for sake of similarity with the Fisher market model in the goods case.} vector, $\p=(p_1, \dots, p_m)$, each chore $j$ pays $p_j$ per unit of chore. Agents perform chores in exchange for payment. Given chore payments, each agent $i$ aims to obtain the set of chores that minimizes her total cost subject to her payment constraint, i.e., receiving a total payment of at least $e_i$. 

Given a (fractional) allocation $\x$ with a price vector $\p$, the \textit{spending}\footnote{This is actually the \textit{earning} of agent $i$, but we refer to earning as spending for sake of similarity with the Fisher market model in the goods case.} of an agent $i$ under $(\x,\p)$ is given by $\p(\x_i) = \sum_{j\in M} p_jx_{ij}$. We define the \textit{bang-per-buck} ratio $\alpha_{ij}$ of chore $j$ for an agent $i$ as $\alpha_{ij} = c_{ij}/p_j$, and the \textit{minimum bang-per-buck} (mBB) ratio as $\alpha_i = \min_{j} \alpha_{ij}$. We define $\mBB_i = \{j\in M : c_{ij}/p_j = \alpha_i \}$, called the \textit{mBB-set}, to be the set of chores that give mBB to agent $i$ at prices $\p$. We say $(\x,\p)$ is \textit{`on mBB'} if for all agents $i$ and chores $j$, $x_{ij} > 0  \Rightarrow j\in \mBB_i$. For integral $\x$, this means that $\x_i\subseteq \mBB_i$ for all $i\in N$. 

A \textit{market equilibrium} or \textit{market outcome} is a (fractional) allocation $\x$ of the chores to the agents and set of prices $\p$ of the chores satisfying the following properties:

\begin{itemize}
\item the market clears, i.e., all chores are fully allocated. Thus, for all $j$, $\sum_{i\in N} x_{ij} = 1$,
\item each agent receives their minimum payment, for all $i\in N$, $\p(\x_i) = \sum_{j\in M} x_{ij} p_j = e_i$, and,
\item agents only receive chores that give them minimum bang-per-buck, i.e., $(\x,\p)$ is on mBB. 
\end{itemize}

Given a market outcome $(\x,\p)$ with $\x$ integral, we say it is \textit{price envy-free up to one chore} (pEF1) if for all $i,h\in N$ there is a chore $j\in \x_i$ such that $\p(\x_i\setminus j) \le \p(\x_h)$. We say that an agent $i$ \textit{pEF1-envies} an agent $h$, if for all $j\in \x_i$, $\p(\x_i\setminus j) > \p(\x_h)$, i.e., the pEF1 condition between $i$ and $h$ is violated. For integral market outcomes on mBB, the pEF1 condition implies the EF1 condition. 

\begin{lemma}\label{lem:pEF1impliesEF1}
Let $(\x,\p)$ be an integral market outcome on mBB. If $(\x,\p)$ is pEF1 then $\x$ is EF1 and fPO.
\end{lemma}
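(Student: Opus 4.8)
My plan is to prove the two conclusions separately, both resting on the basic consequences of the mBB property, which I would establish first. The central observation is that, since $(\x,\p)$ is on mBB, for every agent $i$ and every chore $j \in \x_i$ we have $c_{ij} = \alpha_i p_j$, so that $c_i(S) = \alpha_i \p(S)$ for any $S \subseteq \x_i$; in particular $c_i(\x_i) = \alpha_i \p(\x_i) = \alpha_i e_i$. On the other hand, because $\alpha_i = \min_{j'} c_{ij'}/p_{j'}$ is the \emph{minimum} bang-per-buck ratio, for \emph{every} chore $j$ (whether or not $i$ holds it) we have $c_{ij} \ge \alpha_i p_j$, and hence $c_i(T) \ge \alpha_i \p(T)$ for any set $T$. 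I would also note that $\alpha_i > 0$, since costs are positive ($a,b>0$) and prices are positive. These two facts—an exact equality on bundles an agent owns and an inequality on arbitrary bundles—are the workhorses of both parts.

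For EF1, I would fix agents $i,h$ and invoke pEF1 to obtain a chore $j \in \x_i$ with $\p(\x_i \setminus j) \le \p(\x_h)$. Applying the mBB equality to the sub-bundle $\x_i \setminus j \subseteq \x_i$ and the mBB inequality to $\x_h$ then gives
\[
c_i(\x_i \setminus j) = \alpha_i \p(\x_i \setminus j) \le \alpha_i \p(\x_h) \le c_i(\x_h),
\]
where the middle step uses $\alpha_i > 0$ together with the pEF1 inequality. Thus the same chore $j$ witnesses the EF1 condition between $i$ and $h$, and since $i,h$ were arbitrary, $\x$ is EF1.

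For fPO, I would argue by contradiction: suppose a fractional allocation $\y$ dominates $\x$, so $c_i(\y_i) \le c_i(\x_i)$ for all $i$ with strict inequality for some agent $h$. Using the mBB inequality for $\y_i$ and the mBB equality for $\x_i$, for each $i$,
\[
\alpha_i \p(\y_i) \le c_i(\y_i) \le c_i(\x_i) = \alpha_i \p(\x_i),
\]
and since $\alpha_i > 0$ this yields $\p(\y_i) \le \p(\x_i)$; for agent $h$ the strict inequality $c_h(\y_h) < c_h(\x_h)$ forces $\p(\y_h) < \p(\x_h)$. Summing over all agents gives $\sum_i \p(\y_i) < \sum_i \p(\x_i)$. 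But both $\x$ and $\y$ fully allocate every chore, so $\sum_i \p(\y_i) = \sum_{j} p_j = \sum_i \p(\x_i)$, a contradiction. Hence no such $\y$ exists and $\x$ is fPO.

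The individual steps are short, so I do not expect a serious obstacle; the main things to get right are the \emph{direction} of the mBB inequality, which is reversed relative to the goods setting—here mBB is a minimizer, so arbitrary bundles cost at least their $\alpha_i$-scaled price—and the tacit assumption that all prices are positive, so that the per-chore ratios and $\alpha_i$ are well defined and strictly positive. I would state this positivity explicitly (it follows from $a,b>0$ and the equilibrium structure) before invoking the cancellations by $\alpha_i$.
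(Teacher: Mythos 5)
Your proof is correct, and half of it coincides with the paper's. The EF1 part is exactly the paper's argument: pEF1 supplies a chore $j\in\x_i$ with $\p(\x_i\setminus j)\le \p(\x_h)$, the mBB equality gives $c_i(\x_i\setminus j)=\alpha_i\p(\x_i\setminus j)$, and the mBB lower bound gives $\alpha_i\p(\x_h)\le c_i(\x_h)$. The divergence is in the fPO part: the paper observes that $(\x,\p)$ is a market equilibrium of the Fisher instance with $e_i=\p(\x_i)$ and then invokes the First Welfare Theorem as a black box, whereas you unroll that theorem's proof directly --- from a dominating fractional $\y$ you deduce $\p(\y_i)\le\p(\x_i)$ for all $i$, strictly for $h$, and contradict the fact that both allocations have total spending $\sum_j p_j$. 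Your route is self-contained and, more importantly, it surfaces a hypothesis that the citation hides: cancelling $\alpha_i$ requires $\alpha_i>0$. This is substantive, not bookkeeping. The lemma is stated for general cost functions $c_i:M\to\R_{\ge 0}$, and with a zero-cost chore the fPO conclusion can genuinely fail: take two agents and two chores with $c_{11}=c_{12}=0$, $c_{21}=1$, $c_{22}=2$, prices $(1,2)$, and $\x_1=\{1\}$, $\x_2=\{2\}$; this outcome is on mBB and pEF1, yet giving both chores to agent 1 Pareto-dominates it. The EF1 half survives when $\alpha_i=0$, but the fPO half does not, so the lemma is correct only under the paper's standing assumption that costs are positive (as in bivalued instances with $a,b>0$) --- precisely the positivity you flagged and made explicit. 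In short: same EF1 argument, and a more elementary fPO argument that makes the needed hypothesis visible.
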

\begin{proof}
We first show that $(\x,\p)$ forms a market equilibrium for the Fisher market instance $(N,M,C,e)$, where for every $i\in N$, $e_i = \p(\x_i)$. It is easy to see that the market clears and each agent receives their minimum payment. Further $\x$ is on mBB as per our assumption. Now the fact that $\x$ is fPO follows from the First Welfare Theorem \cite{mas1995microeconomic}, which shows that for any market equilibrium $(\x,\p)$, the allocation $\x$ is fPO. 

Since $(\x,\p)$ is pEF1, for all pairs of agents $i,h \in N$, there is some chore $j\in \x_i$ s.t. $\p(\x_i\setminus j)\le \p(\x_h)$. Since $(\x,\p)$ is on mBB, $\x_i \subseteq \mBB_i$. Let $\alpha_i$ be the mBB-ratio of $i$ at the prices $\p$. By the definition of mBB, $c_i(\x_i\setminus j) = \alpha_i \p(\x_i\setminus j)$, and $c_i(\x_h) \ge \alpha_i \p(\x_h)$. Combining these implies $\x$ is EF1.
\end{proof}

Our Algorithm~\ref{alg:ef1po} begins with and maintains an integral market outcome $(\x,\p)$ on mBB\footnote{Note that although our algorithm only maintains the allocation $\x$ and prices $\p$, the associated Fisher market instance is always implicitly present by setting $e_i = \p(\x_i)$ as in the proof of Lemma~\ref{lem:pEF1impliesEF1}}, and modifies $\x$ and $\p$ appropriately to eventually arrive at an outcome $(\x,\p)$ on mBB where the pEF1 condition is satisfied. Lemma~\ref{lem:pEF1impliesEF1} then ensures that $\x$ is EF1+fPO.

We now define least spenders as agents with minimum spending, and big spenders as agents with maximum spending after the removal of their highest-priced chore.
\begin{definition}[Least and big spenders]\label{def:lsbs}
An agent $\ell \in \s{argmin}_{i\in N} \p(\x_i)$ is referred to as a \textit{least spender} (LS). An agent $b \in \s{argmax}_{i\in N} \min_{j\in \x_i} \p(\x_i \setminus j)$ is referred to as a \textit{big spender} (BS). 
\end{definition}

We break ties arbitrarily to decide a unique LS and BS. Together with Lemma~\ref{lem:pEF1impliesEF1}, the following lemma shows that in order to obtain an EF1 allocation, it is sufficient to focus on the pEF1-envy the big spender has towards the least spender.

\begin{lemma}\label{lem:BSenviesLS}
Let $(\x,\p)$ be an integral market outcome on mBB. If $\x$ is not EF1, then the big spender $b$ pEF1-envies the least spender $\ell$. 
\end{lemma}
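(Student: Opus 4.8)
The plan is to argue by contraposition through the pEF1 notion and then push the witnessed violation onto the designated pair $(b,\ell)$ using the very definitions of the big and least spenders, which are tailored for exactly this step. First I would exhibit \emph{some} pEF1-violation. Since $\x$ is not EF1, there is a pair $(i,h)$ with $i$ EF1-envying $h$, i.e.\ $c_i(\x_i\setminus j) > c_i(\x_h)$ for every $j\in\x_i$. Note $\x_i\neq\emptyset$, since an agent holding no chores has cost $0$ and envies no one. Because $(\x,\p)$ is on mBB, the same computation used in the proof of Lemma~\ref{lem:pEF1impliesEF1} (namely $c_i(\x_i\setminus j)=\alpha_i\p(\x_i\setminus j)$ and $c_i(\x_h)\ge\alpha_i\p(\x_h)$, with $\alpha_i>0$) converts this EF1-envy into pEF1-envy: $\p(\x_i\setminus j) > \p(\x_h)$ for all $j\in\x_i$. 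Writing $\mu(i):=\min_{j\in\x_i}\p(\x_i\setminus j)$ for the spending of $i$ after deleting her highest-priced chore, this says $\mu(i) > \p(\x_h)$, a genuine finite inequality as $\x_i\neq\emptyset$.

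Next I would amplify this single witnessed violation to the extremal agents. By Definition~\ref{def:lsbs}, the big spender $b$ maximizes $\mu(\cdot)$ over all agents, so $\mu(b)\ge\mu(i)$; and the least spender $\ell$ minimizes $\p(\x_\cdot)$, so $\p(\x_\ell)\le\p(\x_h)$. Chaining these with the inequality from the first step yields
\[
\mu(b)\;\ge\;\mu(i)\;>\;\p(\x_h)\;\ge\;\p(\x_\ell),
\]
and hence $\mu(b) > \p(\x_\ell)$.

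Finally I would unpack $\mu(b) > \p(\x_\ell)$ back into the definition of pEF1-envy. Since $\mu(b)=\min_{j\in\x_b}\p(\x_b\setminus j)$, every chore $j\in\x_b$ satisfies $\p(\x_b\setminus j)\ge\mu(b) > \p(\x_\ell)$. This is exactly the statement that $b$ pEF1-envies $\ell$, which completes the argument.

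As for difficulty, there is no real combinatorial or quantitative obstacle here: once Lemma~\ref{lem:pEF1impliesEF1} is available, the lemma reduces to a short extremal chain. The only points that demand care are aligning the two comparison directions correctly---the big spender is defined to \emph{maximize} the after-removal spending $\mu$, which is precisely the quantity on the large side of a pEF1 violation, while the least spender \emph{minimizes} raw spending, the quantity on the small side---and handling the degenerate empty-bundle case, which cannot occur for a genuine envier. I expect these definitional alignments, rather than any estimate, to constitute the whole substance of the proof.
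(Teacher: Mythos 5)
Your proof is correct and takes essentially the same route as the paper: it converts the EF1 violation into a pEF1 violation via the mBB relations (the paper simply cites the contrapositive of Lemma~\ref{lem:pEF1impliesEF1} where you re-derive the computation inline), and then chains the extremal definitions of the big and least spenders. Incidentally, your chain uses the correct inequality $\p(\x_h)\ge\p(\x_\ell)$ for the least spender, where the paper's own write-up contains a small typo ($\p(\x_i)\ge\p(\x_\ell)$).
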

\begin{proof}
If $\x$ is not EF1, then Lemma~\ref{lem:pEF1impliesEF1} implies that $\x$ is not pEF1. Hence there is a pair of agents $i,h$ s.t. for every chore $j\in \x_i$, $\p(\x_i \setminus j) > \p(\x_h)$. By the definition of big spender, we know $\p(\x_b \setminus j') \ge \p(\x_i \setminus j)$, for every $j'\in \x_b$. By the definition of least spender, $\p(\x_i) \ge \p(\x_\ell)$. Putting these together we get $\p(\x_b \setminus j') > \p(\x_\ell)$ for every $j'\in \x_b$, implying that $b$ pEF1-envies $\ell$.
\end{proof}

Given a market outcome $(\x,\p)$ on mBB, we define the mBB graph to be a bipartite graph $G = (N,M,E)$ where for an agent $i$ and chore $j$, $(i,j)\in E$ iff $j\in \mBB_i$. Further, an edge $(i, j)$ is called an \textit{allocation edge} if $j \in \x_i$, otherwise it is called an \textit{mBB} edge. 

For agents $i_0,\dots,i_{\ell}$ and chores $j_1,\dots,j_{\ell}$, a path $P = (i_0, j_1, i_1, j_2, \dots, j_{\ell}, i_{\ell})$ in the mBB graph, where for all $1\le \ell' \le \ell$, $j_{\ell'} \in \x_{i_{\ell'-1}}\cap \mBB_{i_{\ell'}}$, is called a \textit{special} path. We define the \textit{level} $\lambda(h; i_0)$ of an agent $h$ w.r.t. $i_0$ to be half the length of the shortest special path from $i_0$ to $h$, and to be $n$ if no such path exists. A path $P = (i_0, j_1, i_1, j_2, \dots, j_{\ell}, i_{\ell})$ is an \textit{alternating path} if it is special, and if $\lambda(i_0; i_0) < \lambda(i_1; i_0) < \dots < \lambda(i_\ell; i_0)$, i.e. the path visits agents in increasing order of their level w.r.t. $i_0$. Further, the edges in an alternating path alternate between allocation edges and mBB edges. Typically, we consider alternating paths starting from a big spender agent.

\begin{definition}[Component $C_i$ of a big spender $i$]\label{def:component}
For a big spender $i$, define $C_i^{\ell}$ to be the set of all chores and agents which lie on alternating paths of length $\ell$ starting from $i$. Call $C_i = \bigcup_{\ell} C_i^{\ell}$ the \textit{component} of $i$, i.e., the set of all chores and agents reachable from $i$ through alternating paths.
\end{definition}

\section{EF1+fPO allocation of indivisible chores}\label{sec:indivisible}
In this section, we present our main result:
\begin{theorem}\label{thm:indivisible}
Given a bivalued fair division instance $(N,M,C)$ of indivisible chores with all $c_{ij} \in \{a,b\}$ for some $a,b\in \mathbb{R}^+$, an EF1+fPO allocation can be computed in strongly polynomial-time.
\end{theorem}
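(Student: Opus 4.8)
The plan is to design an algorithm that maintains an integral market outcome $(\x,\p)$ on mBB throughout, so that fPO comes for free via Lemma~\ref{lem:pEF1impliesEF1}, and to drive the process toward the pEF1 condition, which then yields EF1. By Lemma~\ref{lem:BSenviesLS}, the only obstruction to EF1 is the big spender $b$ pEF1-envying the least spender $\ell$, so the entire effort reduces to eliminating this single source of envy while preserving the mBB invariant. I would initialize with a convenient integral mBB outcome — for instance, assign every chore to an agent who has cost $1$ for it (possible since $M_{high}$ chores cost $k$ to everyone and each $M_{low}$ chore has some agent valuing it at $1$), set prices to make the assignment mBB, and then repeatedly perform transfers that reduce the spending imbalance.

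\textbf{Transfers along alternating paths.} The core operation I would use is to move chores from the big spender toward the least spender along the alternating/special paths of Definition~\ref{def:component}, using the component $C_b$ of the big spender. When $\ell \in C_b$, there is a special path $b = i_0, j_1, i_1, \dots, j_t, i_t = \ell$ along which each $j_{s}\in \x_{i_{s-1}}\cap \mBB_{i_s}$; transferring each $j_s$ one step down the path (from $i_{s-1}$ to $i_s$) keeps the allocation on mBB because each transferred chore lies in the mBB-set of its recipient, and it decreases $\p(\x_b)$ while increasing $\p(\x_\ell)$, shrinking the gap that caused the pEF1-envy. Since the instance is bivalued, all prices take one of only two values (scaled copies of $1$ and $k$), which tightly controls how spending changes under a transfer and is what makes strong polynomiality plausible: the number of distinct spending levels and price values is bounded independent of the magnitudes.

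\textbf{Price updates when the path does not reach $\ell$.} The case I expect to be the main obstacle is when $\ell \notin C_b$, i.e.\ no alternating path connects the big spender's component to the least spender, so no transfer alone can move payment across. The standard remedy, adapted from the integral-market-equilibrium framework of Barman et al.\ and Garg--Murhekar, is to raise (for chores this likely means \emph{lower}, since agents earn) the prices of all chores inside $C_b$ by a multiplicative factor, chosen as the largest factor that keeps $(\x,\p)$ on mBB — the factor is determined by the first moment a new mBB edge appears at the boundary of $C_b$, which either enlarges the component (eventually capturing $\ell$) or triggers a transfer. Controlling this price-update loop is delicate for chores: unlike goods, the mBB structure and the direction of beneficial transfers are reversed, the big/least-spender roles interact with the ``up to one chore'' slack asymmetrically, and one must rule out cycling. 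I would argue termination by exhibiting a potential function — plausibly the sorted spending vector under a suitable lexicographic order, or the big spender's spending $\p(\x_b\setminus j)$ — that strictly improves at each major step and takes only polynomially many distinct values because of the bivalued price structure.

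\textbf{Putting it together.} Finally I would show that when no further transfer or price update applies, the outcome must be pEF1: if $b$ still pEF1-envied $\ell$, either the component $C_b$ would admit a transfer (contradiction) or a price raise would be available (contradiction), so the stopping condition coincides exactly with pEF1. Invoking Lemma~\ref{lem:pEF1impliesEF1} then certifies EF1+fPO. The strongly-polynomial bound follows by bounding the number of price-update phases and transfers using the potential argument together with the fact that in a bivalued instance the relevant quantities (prices, spendings, level counts) range over a set whose size depends only on $n$, $m$, and $k$ in a controlled, magnitude-independent way.
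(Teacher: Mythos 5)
Your high-level framework is the same one the paper uses---maintain an integral market outcome on mBB so that fPO is automatic (Lemma~\ref{lem:pEF1impliesEF1}), reduce EF1 to eliminating the big spender's pEF1-envy toward the least spender (Lemma~\ref{lem:BSenviesLS}), and alternate chore transfers along mBB edges with multiplicative price rises on the big spender's component. But this is precisely the ``natural adaptation'' of the Barman et al.\ and Garg--Murhekar goods machinery, which the paper explicitly warns does not extend easily to chores, and the piece your proposal defers is exactly where the whole difficulty lies: proving that the transfer/price-rise loop terminates in polynomially many steps. Your two candidate potentials fail as stated. The big spender's spending $\p(\x_b\setminus j)$ is \emph{increased} by a price rise (the chores whose prices go up are owned by agents in the big spender's own component), so it is not monotone across major steps. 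Your cascading transfer along a full special path changes each intermediate agent's spending by $p_{j_s}-p_{j_{s+1}}$, which can be negative, so the minimum spending can drop and the sorted spending vector can move the wrong way lexicographically; the paper's invariant that the least spender's spending never decreases (Lemma~\ref{lem:spendingofLS}) is something its algorithm is engineered to preserve, e.g.\ after time $T$ it only uses $3$-agent paths in which both transferred chores have price $k$, so the middle agent's spending is untouched. Finally, your claim that bivaluedness keeps prices and spendings in a small, magnitude-independent set is circular: prices stay in a set like $\{1,k,k^2\}$ only if each chore undergoes a bounded number of price rises, and bounding the number of price rises is exactly what the potential argument was supposed to deliver. (A bound depending on $k$ would also not be strongly polynomial, since $k$ is an input magnitude.)

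What the paper does instead, and what is missing from your proposal, is a global structure that makes the number of price rises trivially bounded. Algorithm~\ref{alg:MakeInitGroups} first partitions the agents into groups $N_1,\dots,N_R$ that are individually pEF1 partial components, with no mBB edges from agents in later groups to chores in earlier ones and all high-cost chores in the last group (Lemma~\ref{lem:initcompprop}). The main algorithm then raises the prices of an \emph{entire group} by the fixed factor $k$, which by bivaluedness instantly creates mBB edges from \emph{every} unraised agent to \emph{every} chore of the raised group (Lemma~\ref{lem:priortot}), and it proves that groups are raised at most once each, in the order $N_1,N_2,\dots$, so there are at most $n$ price rises in the entire run; termination of the transfer steps is then argued separately in two phases, before and after the least spender first enters a raised group. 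Without the group decomposition, the fixed raise factor, and the ``each group raised at most once, in order'' guarantee, nothing in your argument rules out the dynamically changing component $C_b$ being re-raised many times, i.e.\ a super-polynomial number of phases. One smaller correction: to create mBB edges from outside agents to chores in $C_b$ you must \emph{raise} those chores' prices (since $\alpha_{ij}=c_{ij}/p_j$ and outside agents have $c_{ij}=k$ for them), so your parenthetical hedge that for chores one should ``lower'' prices goes the wrong way.
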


We prove Theorem~\ref{thm:indivisible} by showing that our Algorithm~\ref{alg:ef1po} computes an EF1+fPO allocation in polynomial-time. To ensure smooth presentation, we defer some proofs to Appendix~\ref{app:indivisible}. 

\subsection{Obtaining Initial Groups}\label{sec:makeinitgroups}
Recall that we can scale the costs so that they are in $\{1,k\}$, for some $k>1$. The first step of Algorithm~\ref{alg:ef1po} is to obtain a partition of the set $N$ of agents into \textit{groups} $N_1,\dots,N_R$ with desirable properties. For this, we use Algorithm~\ref{alg:MakeInitGroups} (called $\s{MakeInitGroups}$). 

\begin{algorithm}[ht]
\caption{$\s{MakeInitGroups}$}\label{alg:MakeInitGroups}
\textbf{Input:} Fair division instance $(N,M,C)$ with $c_{ij}\in \{1,k\}$\\
\textbf{Output:} Integral alloc. $\x$, prices $\p$, agent groups $\{N_r\}_{r\in [R]}$
\begin{algorithmic}[1]
\State $(\x,\p) \gets$ initial cost minimizing integral market allocation, where $p_j = c_{ij}$ for $j \in \x_i$.
\State $R\gets 1$, $N' \gets N$
\While{$N' \neq \emptyset$}
\State $b \gets \s{argmax}_{i \in N'} \min_{j\in \x_i} \p(\x_i\setminus j)$ \Comment{Big Spender}
\State $C_b \gets$ Component of $b$ \Comment{See Definition~\ref{def:component}}
\NoDo
\While{$\exists$ agent $i \in C_b$ s.t. $\forall j \in \x_b$, $\p(\x_b \setminus j) > \p(\x_i)$}
\ReDo
\State Let $(b, j_1, h_1, j_2, \dots, h_{\ell - 1}, j_\ell, i)$ be the shortest alternating path from $b$ to $i$
\State $\x_{h_{\ell - 1}} \gets \x_{h_{\ell - 1}} \setminus \{j_\ell\}$ \Comment{Chore transfer}
\State $\x_i \gets \x_i \cup \{j_\ell\}$
\State $b \gets \s{argmax}_{i\in N'} \min_{j\in \x_i} \p(\x_i\setminus j)$
\EndWhile  
\State $H_R \gets C_b \cap (N' \cup \x_{N'})$ \Comment{Partial component}
\State $N_R \gets H_R \cap N$ \Comment{Agent group}
\State $N' \gets N' \setminus N_R$, $R\gets R+1$
\EndWhile
\State \Return $(\x,\p, \{N_r\}_{r\in [R]})$
\end{algorithmic}
\end{algorithm}

Algorithm~\ref{alg:MakeInitGroups} starts with a cost-minimizing market outcome $(\x,\p)$ where each chore $j$ is assigned to an agent who has minimum cost for $j$. This ensures the allocation is fPO. The chore prices are set as follows. Each low-cost chore $j$ is assigned to an agent $i$ s.t. $c_{ij} = 1$. If an agent values all chores at $k$, then we can re-scale all values to $1$. Each low-cost chore is priced at 1, and each high-cost chore is priced at $k$. This pricing ensures that the mBB ratio of every agent is 1. The algorithm then eliminates pEF1-envy from the component of the big spender $b$ by identifying an agent $i$ in $C_b$ that is pEF1-envied by $b$, and transferring a single additional chore $j_\ell$ to $i$ from an agent $h_{\ell-1}$ who lies along a shortest alternating path from $b$ to $i$ (Lines 7 \& 8). Note that the identity of the big spender may change after transferring $j_\ell$ if $j_\ell$ belonged to $b$, so we must check who the big spender is after each transfer (Line 10). Once the component of the current big spender $b$ is pEF1, the same process is applied to the next big spender outside the previously made components. Repeated application of this process creates disjoint partial components $H_1, \ldots, H_R$ of agent sets $N_1,\dots,N_R$, where $R\le n$, all of which are pEF1. We refer to $N_1,\dots,N_R$ as agent \textit{groups}, and $H_1,\dots,H_R$ as initial (partial) components. Note also that the spending (up to the removal of the biggest chore) of the big spender $h_r$ of $H_r$ is weakly decreasing with $r$. We now record several properties of the output of Algorithm~\ref{alg:MakeInitGroups} .

\begin{restatable}{lemma}{initcompprop}\label{lem:initcompprop}
Algorithm~\ref{alg:MakeInitGroups} returns in $\poly{n,m}$-time a market outcome $(\x,\p)$ with agents partitioned into groups $N_1,\dots, N_R$, with the following properties:\normalfont
\begin{enumerate}
\item[(i)] For all low-cost chores $j\in M$, $p_j = 1$, and for all high-cost chores $j\in M$, $p_j = k$. 
\item[(ii)] The mBB ratio $\alpha_i$ of every agent $i$ is $1$.
\item[(iii)] Let $H_r$ be the collection of agents $N_r$ and chores allocated to them in $(\x,\p)$. Then each $H_r$ is a \textit{partial component} of some agent. That is, for each $r\in [R]$, there is an agent $h_r \in H_r$ s.t. $H_r$ comprises of all agents and chores not in $\bigcup_{r'<r} H_{r'}$ reachable through alternating paths from $h_r$. Further, $h_r$ is the big spender among agents not in $\bigcup_{r'<r} H_{r'}$:
\[h_r \in \s{argmax}_{i\notin (\bigcup_{r'<r} H_{r'})} \min_{j\in \x_i} \p(\x_i\setminus j)\]
\item[(iv)] The spending (up to removal of the largest chore) $f(r)$ of the big spender in $H_r$ weakly decreases with $r$. Here $f(r) = \max_{i\in H_r} \min_{j\in \x_i} \p(\x_i\setminus j)$.
\item[(v)] Each \textit{group is pEF1}, i.e., an agent does not pEF1-envy other agents in the same group.  
\item[(vi)] For every agent $i\in H_r$ and chore $j\in H_{r'}$ with $r'<r$, $c_{ij} = k$.
\item[(vii)] All high-cost chores belong to $H_R$.
\end{enumerate}
\end{restatable}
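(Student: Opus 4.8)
The plan rests on two observations about Algorithm~\ref{alg:MakeInitGroups}: the prices $\p$ are fixed at Line~1 and never changed, so only the allocation $\x$ moves, via chore transfers that (I claim) stay inside the component of the current big spender. Properties (i) and (ii) are then the base cases. Since the initial market is cost-minimizing, at Line~1 every low-cost chore is given to an agent valuing it at $1$ and every high-cost chore to an agent valuing it at $k$, so $p_j=1$ and $p_j=k$ respectively; as $\p$ never changes, (i) holds throughout. Consequently each ratio $c_{ij}/p_j$ equals $1$ (when $c_{ij}=p_j$) or $k\ge 1$ (a low-cost chore an agent values at $k$); using the standing assumption that every agent values some chore at $1$, the minimum is attained and $\alpha_i=1$, giving (ii).

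Property (iii) is bookkeeping: at the start of outer iteration $r$ we have $N'=N\setminus\bigcup_{r'<r}N_{r'}$, and because $b$ is recomputed at Line~10, at the inner-loop exit $h_r:=b$ is a big spender over $N'$, while $H_r=C_{h_r}\cap(N'\cup\x_{N'})$ is by construction the restriction of $h_r$'s component to the still-unassigned agents and their chores. For (iv) the key sub-claim is that one transfer (Lines~8--9) weakly decreases the global quantity $g:=\max_i\min_{j\in\x_i}\p(\x_i\setminus j)$: the donor only loses spending, and the recipient $i$, being pEF1-envied by $b$, satisfies $\p(\x_i)<\min_j\p(\x_b\setminus j)\le g$, so after receiving $j_\ell$ its value $\min_{j'}\p((\x_i\cup j_\ell)\setminus j')\le\p(\x_i)$ remains below $g$. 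Together with the claim that agents in $N'\setminus N_r$ are untouched during iteration $r$, this gives $f(r+1)\le f(r)$, since passing to iteration $r+1$ only shrinks $N'$ and hence the max.

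Property (v) localizes Lemma~\ref{lem:BSenviesLS}: if some $i,h\in N_r$ had $i$ pEF1-envying $h$, then since $h_r$ is a big spender over $N'\supseteq N_r$ we would get $\min_{j'}\p(\x_{h_r}\setminus j')\ge\min_j\p(\x_i\setminus j)>\p(\x_h)$, so $h_r$ pEF1-envies $h\in C_{h_r}$, contradicting the inner-loop exit condition. For (vi), suppose $i\in H_r$, $j\in H_{r'}$, $r'<r$, and $c_{ij}=1$; then $j$ is low-cost with $p_j=1$, so $j\in\mBB_i$, and since $j$ is allocated to some $i'\in N_{r'}\subseteq C_{h_{r'}}$, the alternating path reaching $i'$ extends by the allocation edge $i'\!-\!j$ and the mBB edge $j\!-\!i$, placing $i\in C_{h_{r'}}$; as $i\in N'$ when $H_{r'}$ was formed, $i$ would have joined $H_{r'}$, a contradiction. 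Finally, for (vii) a high-cost chore $j$ has $p_j=k$, so $c_{ij}/p_j=1=\alpha_i$ for \emph{every} agent $i$, i.e. $j$ is mBB for all agents; if $j\in H_{r'}$ then $j$ sits in $\x_{i'}$ for some $i'\in C_{h_{r'}}$, and the mBB edges from $j$ to all agents make every agent of $N'$ reachable from $h_{r'}$, forcing $N_{r'}=N'$ and hence $r'=R$.

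The main obstacle will be justifying the ``freezing'' claim used throughout: that once $H_{r'}$ is carved out, no later transfer moves its chores or agents and no later big spender pEF1-envies an agent inside it. This matters because (iv), (vi) and (vii) are asserted about the \emph{final} groups and spendings, while the combinatorial arguments above are run at the moment each group is formed; the intersection $C_b\cap(N'\cup\x_{N'})$ at Line~11 explicitly discards the part of the component reaching into earlier groups, so I must show that this discarded part never receives a transfer later. I expect this to follow from the monotonicity $f(r)\le f(r')$ for $r'<r$—an agent in a high-$f$ early group should be too rich to be pEF1-envied by the smaller later big spender—but making this airtight, together with the \emph{strongly} polynomial bound on the number of inner-loop transfers (which does not follow from the weak monotonicity of $g$ alone and needs a secondary lexicographic potential on the spending profile, independent of $k$), is the technically heaviest part of the argument.
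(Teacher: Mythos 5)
Your treatment of (i), (ii), (v), (vi) and (vii) matches the paper's own proof almost verbatim, and your monotonicity argument for $g=\max_i\min_{j\in\x_i}\p(\x_i\setminus j)$ is sound. But the proposal has a genuine gap, and it is precisely the one you flag at the end: every one of your arguments certifies the stated property for the allocation \emph{at the moment group $N_r$ is carved out}, while the lemma asserts the properties for the allocation the algorithm \emph{returns}. Without the freezing claim---that later outer iterations never move a chore into or out of an already-formed $H_{r'}$---none of (iii)--(vii) is established for the output, so this is not a finishing touch but the load-bearing step; the paper devotes the entire final paragraph of its proof to it. Its argument has two pieces, and your sketch supplies only one. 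Piece one (your ``too rich to be envied'' idea, which is indeed the right one): an earlier group $H_{r'}$ with $r'<R$ contains no high-cost chores (by the (vii)-type argument), so all its chores are priced $1$; pEF1 within $N_{r'}$ then forces every $i\in N_{r'}$ to spend at least $\p(\x_{b'})-1=f(r')$, while any later big spender $b$ satisfies $\min_j \p(\x_b\setminus j)\le f(r')$ by monotonicity of $g$ across iterations, so $b$ never pEF1-envies an agent of $N_{r'}$, and hence agents of $N_{r'}$ never \emph{receive} chores. Piece two, which your sketch omits: agents of $N_{r'}$ also never \emph{lose} chores. This is not automatic, since an agent of $N_{r'}$ could sit second-to-last on the alternating path of Line 7 and be forced to donate; it is ruled out because a chore owned by $H_{r'}$ is priced $1$ and is mBB only for agents of $\bigcup_{r''\le r'}H_{r''}$ (your own (vi)-type argument), so any recipient of such a chore would lie in an earlier group, which piece one forbids. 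Both pieces are needed before the formation-time arguments transfer to the final allocation.

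Separately, the statement also claims the algorithm returns in $\poly{n,m}$ time, which your proposal defers to an unspecified ``secondary lexicographic potential.'' The paper proves this in a standalone lemma (Lemma~\ref{lem:inittermination}): while the big spender's identity is fixed, each transfer strictly increases the level of the transferred chore, giving at most $mn$ transfers per fixed big spender; and an agent who ceases to be big spender can regain that role only with strictly smaller spending, which can happen at most $O(m^2)$ times per agent because every attainable disutility has the form $s+tk$ with $0\le s,t\le m$---this counting of attainable values is exactly where bivaluedness yields a strongly polynomial bound. You are right that monotonicity of $g$ alone does not bound the number of inner-loop iterations, so if your proof is meant to cover the full statement, this counting argument (or an equivalent one) must be supplied rather than gestured at.
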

\begin{proof}
Properties (i) and (ii) follow from the construction of the initial allocation in Line 1.

Property (iii) follows from the construction of the set $H_r$ in Line 11. The agent $h_r$ is the last agent $b$ chosen in Line 10. Property (iv) then follows from the last result of (iii).

To see why (v) holds, we examine the loop in Lines 6-10. This loop terminates only once the big spender $b$ in the component does not pEF1-envy any other agent in the component. Since every agent spends (up to the removal of one chore) less than the big spender, it must be that every agent does not pEF1-envy any other agent in the component. Thus, the component is pEF1. 

Next, suppose for some agent $i\in H_r$ and some chore $j\in H_{r'}$ for $r'<r$, $c_{ij} = 1$. Then $j$ is a low-cost chore and must be priced at 1. Hence, $\alpha_{ij} = c_{ij}/p_j = 1 = \alpha_i$, implying that $j\in \mBB_i$. However this means that $i$ would have been added to $H_{r'}$, since there is an alternating path from $h_{r'}$ to $i$ via $j\in H_{r'}$. This is a contradiction, thus showing (vi).

Next, we show (vii). Suppose a high-cost chore $j$ belongs to $H_r$ for $r<R$. We know $j$ is priced at $k$. Then for an agent $i\in H_R$, $\alpha_{ij} = c_{ij}/p_j = 1 = \alpha_i$, implying that $j\in \mBB_i$. However this means $i$ belongs to $H_r$ since there is an alternating path from $h_r$ to $i$ via $j$, which is a contradiction.

Finally, we argue that in making a group starting with the a new big spender in $N'$ outside of previously established pEF1 groups, we do not disturb the previously established groups. Let $N_r$ be the current group being made, and let $N_{r'}$, where $r' < r$, be a previously established group. Let $b$ and $b'$ be the big spenders in $N_r$ and $N_{r'}$, respectively. We show that no agent in $N_{r'}$ gains or loses a chore. Note that there are no high-cost chores in $N_{r'}$, as high-cost chores are mBB for all agents so a group containing a high-cost chore must be the last group $N_R$. Since all chores are priced $1$ and $N_{r'}$ is pEF1, every agent in $N_{r'}$ has total spending either $\p(\x_{b'})$ or $\p(\x_{b'}) - 1$. We also have that $\p(\x_b) \leq \p(\x_{b'})$ by construction. Thus, it cannot be that $b$ pEF1-envies any agent in $N_{r'}$, so agents in $N_{r'}$ will not receive any chores. Agents in $N_{r'}$ also will not lose any chores, since they can only transfer chores to another agent in $N_{r'}$. However, we have already shown that agents in $N_{r'}$ cannot receive chores since they will not be pEF1-envied by $b$, so these transfers are impossible. Thus, all previous groups remain undisturbed while establishing a later group.
\end{proof}

Finally, we argue that:

\begin{restatable}{lemma}{inittermination}\label{lem:inittermination}
Algorithm~\ref{alg:MakeInitGroups} terminates in time $\poly{n,m}$.
\end{restatable}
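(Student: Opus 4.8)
The plan is to bound the two nested loops of Algorithm~\ref{alg:MakeInitGroups} separately, after noting that every primitive operation runs in $\poly{n,m}$ time: computing the big spender (Lines~4 and~10), building the component $C_b$, and finding shortest alternating paths are all breadth-first searches on the bipartite mBB graph, and a single transfer touches two bundles. The outer loop is immediate: the big spender $b$ selected in Line~4 lies in $N'$ and is always placed in $H_R$, hence in $N_R = H_R \cap N$, so each pass of the outer \textbf{while} deletes at least one agent from $N'$; thus it executes at most $n$ times. Everything therefore reduces to bounding the total number of chore transfers performed across all executions of the inner loop (Lines~6--10).

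The first ingredient is discreteness. By Lemma~\ref{lem:initcompprop}(i)--(ii) the prices never change and every $p_j\in\{1,k\}$ with all mBB ratios equal to $1$; hence any spending $\p(\x_i)$ is a sum of at most $m$ terms from $\{1,k\}$ and has the form $a+ck$ with integers $a,c\ge 0$ and $a+c\le m$. Consequently every spending-type quantity ranges over at most $(m+1)^2$ distinct values, a bound that is \emph{independent of the magnitude of $k$}; this is precisely what will separate a strongly polynomial bound from a merely pseudo-polynomial one.

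The primary potential I would use is the big-spender value $B := \max_{i}\min_{j\in\x_i}\p(\x_i\setminus j)$, and the claim is that it never increases. A transfer sends the last chore $j_\ell$ of a shortest alternating path from $h_{\ell-1}$ to the chosen pEF1-envied agent $i$, where $\p(\x_i) < B$ because $b$ pEF1-envies $i$. For the receiver, deleting $j_\ell$ restores its old bundle, so $\min_{j}\p((\x_i\cup\{j_\ell\})\setminus j)\le \p(\x_i) < B$; for the giver, removing a chore only weakly decreases $\min_{j}\p(\x_{h_{\ell-1}}\setminus j)$; all other bundles are untouched. Hence $B$ is non-increasing: it stays equal when $\ell>1$ (the big spender's bundle is unchanged) and can strictly drop when $\ell=1$. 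Combined with the previous paragraph, $B$ strictly decreases at most $(m+1)^2$ times.

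The crux, and the step I expect to be the main obstacle, is to bound the number of transfers inside a single \emph{phase} -- a maximal run during which $B$ is constant -- by $\poly{n,m}$. The natural candidate, the total capped spending $\sum_i \min(\p(\x_i),B)$, strictly increases each step (the receiver's capped spending rises since $\p(\x_i)<B$), but its increments can be as small as a fraction of $k$ and a high-cost transfer can drop the giver $h_{\ell-1}$ from $\ge B$ to far below $B$, so this alone yields only a pseudo-polynomial count. To obtain a strongly polynomial bound I would instead exploit the level structure: choosing $i$ to be a pEF1-envied agent of \emph{minimum} level ensures every intermediate agent on its shortest alternating path has spending at least $B$, so each transfer pushes a chore strictly from level $\ell-1$ to level $\ell$ while any deficit it creates appears at a strictly smaller level. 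I would then track the pair consisting of the number of agents pEF1-envied by the current big spender together with the minimum level attained by such an agent, and argue that each transfer makes combinatorial progress in this pair, so that after $\poly{n,m}$ transfers either an agent leaves the envied set permanently or a level-$1$ (i.e.\ $\ell=1$) transfer forces the big spender to shed a chore; since both $B$ and the big spender's total spending take only $O(m^2)$ values, this bounds the transfers per phase and, together with the $(m+1)^2$ phases and the $n$ outer passes, gives the desired $\poly{n,m}$ running time.
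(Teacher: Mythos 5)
Your proposal gets the ``across-phase'' half of the argument right, and it is essentially the same trick the paper uses for strong polynomiality: spendings are sums of at most $m$ terms from $\{1,k\}$, hence take only $O(m^2)$ distinct values, and the big-spender value $B=\max_i\min_{j\in\x_i}\p(\x_i\setminus j)$ is non-increasing, so it can strictly drop only $O(m^2)$ times. But the step you yourself flag as ``the crux'' --- bounding the number of transfers while $B$ is constant --- is left as a plan, not a proof, and that is exactly where the content of the lemma lies. Your proposed potential (the number of agents pEF1-envied by the current BS, paired with the minimum level of such an agent) does not obviously make lexicographic progress: a transfer along the shortest path makes the giver $h_{\ell-1}$ lose a chore, which can make $h_{\ell-1}$ \emph{newly} pEF1-envied at a strictly smaller level, so the pair can move from $(E,\lambda)$ to $(E+1,\lambda-1)$; nothing in your sketch rules out long cascades of such events, and ``argue that each transfer makes combinatorial progress'' is asserted rather than shown. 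You also quietly modify the algorithm (choosing the envied agent of minimum level), which Algorithm~\ref{alg:MakeInitGroups} does not do.

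The paper closes this gap with a different and cleaner decomposition: it partitions the run into epochs in which the \emph{identity} of the big spender (not merely its value) is fixed. Within one such epoch, every transfer strictly increases the level of the transferred chore; since levels are bounded by $n$, each chore moves at most $n$ times, giving at most $mn$ transfers per epoch. Then, to bound the number of epochs, it proves that an agent who ceases to be the BS and later returns as BS must have strictly smaller spending (up to one chore) upon return --- this is where your discreteness observation enters, yielding $O(m^2)$ returns per agent, hence $O(m^2n)$ epochs and $O(m^3n^2)$ transfers per component, $O(m^3n^3)$ overall. So your proposal contains the right second ingredient but is missing the first: you need a per-chore monotone quantity (the level under a fixed BS), not a global envy-count potential, to control what happens between strict drops of $B$. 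As written, the proposal does not constitute a proof.
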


\begin{algorithm}[ht]
\caption{Computing an EF1+fPO allocation}\label{alg:ef1po}
\textbf{Input:} Fair division instance $(N,M,C)$ with $c_{ij}\in \{1,k\}$\\
\textbf{Output:} An integral allocation $\x$
\begin{algorithmic}[1]
\State $(\x,\p, \{N_r\}_{r\in [R]}) \gets \s{MakeInitGroups(N,M,V)}$
\State $U \gets [R]$ \Comment{Unraised groups}
\State $\x^0 \gets \x$ \Comment{Copy of initial allocation, used in Line 22}
\State $b \gets \s{argmax}_{i\in N} \min_{j\in \x_i} \p(\x_i\setminus j)$ \Comment{Big Spender}
\State $\ell\gets \s{argmin}_{i\in N} \p(\x_i)$ \Comment{Least Spender}
\While{$(\x,\p)$ is not pEF1 \textbf{and} $\ell \in U$} 
\State $b \gets \s{argmax}_{i\in N} \min_{j\in \x_i} \p(\x_i\setminus j)$ 
\State $\ell\gets \s{argmin}_{i\in N} \p(\x_i)$
\State Let $(r,s)$ s.t. $b\in N_r$, $\ell \in N_s$ \Comment{$r < s$}
\If{$r \in U$}
\State Raise prices of chores owned by agents in $N_r$ by a factor of $k$ 
\State $U \gets U \setminus \{r\}$
\Else
\State Transfer a chore from $b$ to $\ell$ along an mBB edge 
\EndIf
\EndWhile
\While{$(\x,\p)$ is not pEF1}
\State $b \gets \s{argmax}_{i\in N} \min_{j\in \x_i} \p(\x_i\setminus j)$
\State $\ell\gets \s{argmin}_{i\in N} \p(\x_i)$
\State Let $(r,s)$ s.t. $b\in N_r$, $\ell \in N_s$
\If{$s > r$}
\State Transfer a chore from $b$ to $\ell$ along an mBB edge 
\ElsIf{$s < r$} 
\State $\exists \, i \in N_{r'}$ s.t. $r' \in U$ and $\exists \, j \in \x_i$ s.t. $j \in \x^0_{\ell}$ 
\State Transfer $j$ from $i$ to $\ell$
\State Transfer a chore from $b$ to $i$
\EndIf
\EndWhile
\State \Return $(\x,\p)$
\end{algorithmic}
\end{algorithm}

\subsection{Overview of Algorithm~\ref{alg:ef1po}}\label{sec:overview}
Our main algorithm (Algorithm~\ref{alg:ef1po}) begins by calling Algorithm~\ref{alg:MakeInitGroups}, which returns a market outcome $(\x,\p)$ and a set of agent groups $\{N_r\}_{r\in[R]}$ (with associated partial components $\{H_r\}_{r\in [R]}$) satisfying properties in Lemma~\ref{lem:initcompprop}. In the subsequent discussion, we refer to $(\x,\p)$ as the \textit{initial allocation}. Also in the subsequent discussion, all mentions of an agent receiving or losing chores are relative to this initial allocation. The following is an important invariant of Algorithm~\ref{alg:ef1po} (after the initial allocation is constructed).
\begin{restatable}{lemma}{spendingofLS}\label{lem:spendingofLS}
The spending of the least spender does not decrease in the run of Algorithm~\ref{alg:ef1po}.
\end{restatable}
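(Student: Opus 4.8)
The plan is to prove the invariant by induction on the iterations of Algorithm~\ref{alg:ef1po}, tracking the quantity $\p(\x_\ell) = \min_{i\in N}\p(\x_i)$ (the spending of the least spender is, by definition, the minimum spending over all agents) and showing it never decreases across a single operation. Since \textsf{MakeInitGroups} only sets up the initial $(\x,\p)$, it suffices to examine the three ways the body of the two \textbf{while} loops can modify $(\x,\p)$: (a) a price raise on a group (Line~11), (b) a single chore transfer from the big spender $b$ to the least spender $\ell$ along an mBB edge (Lines~14 and~22), and (c) the double transfer in the case $s<r$ (Lines~24--26). In every transfer case the operation is executed only when $(\x,\p)$ is not pEF1 (equivalently, not EF1), so by Lemma~\ref{lem:BSenviesLS} the big spender $b$ pEF1-envies the least spender $\ell$; concretely, $\p(\x_b\setminus j') > \p(\x_\ell)$ for every $j'\in\x_b$. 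This single inequality will do most of the work.

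For case (a), raising the prices of chores currently owned by agents in $N_r$ by a factor of $k>1$ strictly increases the spending of every agent in $N_r$ and leaves the spending of all other agents unchanged. Since the map is weakly increasing on every coordinate, the minimum over all agents can only weakly increase; moreover the comment $r<s$ guarantees $\ell\in N_s$ is untouched, so $\p(\x_\ell)$ itself is unchanged while all raised spendings go up. Hence $\min_i\p(\x_i)$ does not decrease. For case (b), transferring $j\in\x_b$ (with $j\in\mBB_\ell$) from $b$ to $\ell$ increases $\p(\x_\ell)$ by $p_j>0$, changes $b$'s spending to $\p(\x_b)-p_j=\p(\x_b\setminus j)>\p(\x_\ell)$ by the pEF1-envy inequality above, and leaves all other agents' spendings unchanged (hence still $\ge\p(\x_\ell)$). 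Therefore after the transfer every agent spends at least the old minimum $\p(\x_\ell)$, so the minimum does not decrease.

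Case (c) is the one I expect to be the main obstacle, since it perturbs three agents at once. Here the allocation changes by moving $j$ (a chore with $j\in\x_i\cap\x^0_\ell$ for some $i\in N_{r'}$, $r'\in U$) from $i$ to $\ell$, and then moving some chore $j''$ from $b$ to $i$. The net effect is: $\ell$ gains $p_j>0$ (so $\ell$ stays above the old minimum); $b$ loses $p_{j''}$ but, exactly as in case (b), ends at $\p(\x_b)-p_{j''}=\p(\x_b\setminus j'')>\p(\x_\ell)$; and $i$'s spending changes by $p_{j''}-p_j$. Since $\ell$ is the least spender we have $\p(\x_i)\ge\p(\x_\ell)$ before the operation, so it suffices to show $p_{j''}\ge p_j$. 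The plan is to argue that $p_j=1$: the chore $j$ lies in $\x^0_\ell$ with $\ell\in N_s$ and $s<r\le R$, so by Lemma~\ref{lem:initcompprop}(vii) $j$ is a low-cost chore, and because it currently resides in the \emph{unraised} group $N_{r'}$ (as $r'\in U$) its price has never been multiplied, giving $p_j=1$. On the other hand every price in the market is at least $1$ throughout the run (prices start at $1$ or $k$ and only ever increase via Line~11), so $p_{j''}\ge 1 = p_j$, yielding $p_{j''}-p_j\ge 0$ and thus $i$'s new spending $\ge\p(\x_i)\ge\p(\x_\ell)$.

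Combining the three cases, after any single operation every agent spends at least the pre-operation value $\p(\x_\ell)=\min_i\p(\x_i)$, so the minimum spending is non-decreasing, and the claim follows by induction over the iterations of Algorithm~\ref{alg:ef1po}. The delicate point to nail down rigorously is the price bookkeeping in case (c) --- specifically that a low-cost chore sitting in an unraised group $N_{r'}$ still carries price $1$; this relies on the invariant that chores in unraised groups retain their initial unit price (equivalently, that price raises in Line~11 only ever touch chores of already-raised or currently-being-raised groups), which I would either extract from the structure maintained by the algorithm or prove as a short auxiliary invariant alongside this lemma.
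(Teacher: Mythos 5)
Your overall structure (price-rise steps, direct BS-to-LS transfers, and the three-agent transfer after $T$, with the pEF1-envy inequality $\p(\x_b\setminus j)>\p(\x_\ell)$ handling $b$, and all untouched agents staying above the old minimum) mirrors the paper's proof, and your cases (a) and (b) are correct. The genuine gap is in case (c), precisely at the point you flagged as delicate: the claim $p_j=1$ is false, and the auxiliary invariant you propose to establish it --- that chores held by agents in unraised groups retain their initial unit price --- is also false. The chore $j$ lies in $\x^0_\ell$ where $\ell$ belongs to a \emph{raised} group $N_s$. By Lemma~\ref{lem:priortot}(iii), at the moment $N_s$ underwent price-rise no agent of $N_s$ had yet lost any chore, so $\ell$ still held $j$ at that moment and $j$'s price was multiplied to $k$ right then. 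Only afterwards, when $\ell$ became the big spender, was $j$ transferred to an agent in an unraised group. So at the step you analyze, $p_j=k$, not $1$; indeed the whole after-$T$ configuration arises exactly because price-$k$ chores have migrated from raised agents to unraised ones, which is the situation your proposed invariant would rule out.

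The case can be repaired, because the inequality you actually need, $p_{j''}\ge p_j$, does hold --- but for a different reason than the one you give. The chore $j''$ comes from $\x_b$ with $b$ in the raised group $N_r$; by Lemma~\ref{lem:priortot}(iii), all of $b$'s chores were priced up to $k$ at the price-rise of $N_r$, and agents of $N_r$ never receive chores afterwards (prior to $T$ only the LS receives chores and the LS is then unraised; after $T$ recipients are either the LS in $N_{<r}$ or an unraised agent). Hence $p_{j''}=p_j=k$, so the intermediate agent $i$'s spending is in fact \emph{unchanged} by the swap --- which is exactly the fact the paper's proof invokes (``the spending of $i$ does not change during this chore transfer''). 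As written, however, your argument for the key inequality rests on a false statement, so the proof does not go through without this correction.
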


We say that a group $N_r$ is above (resp. below) group $N_s$ if $r < s$ (resp. $r > s$). Lemma~\ref{lem:initcompprop} shows that each group $N_r$ is initially pEF1. Hence if the initial allocation $(\x,\p)$ is not pEF1, then the big spender $b$ and the least spender $\ell$ must be in different components. Since $b\in H_1$, it must be the case that $\ell\in H_s$ for some $s>1$. Since we want to obtain an fPO allocation, we can only transfer along mBB edges. Hence we raise the prices of all chores in $H_1$. We show that doing so creates an mBB edge from all agents $i\notin H_1$ to all chores $j\in H_1$ (Lemma~\ref{lem:priortot} below). In particular, there is an mBB edge from $\ell$ to a chore assigned to $b$. Hence we transfer a chore \textit{directly} from $b$ to $\ell$, thus reducing the pEF1-envy of $b$. This may change the identity of the big and least spenders. If the allocation is not yet pEF1, we must continue this process. 
 
At an arbitrary step in the run of the algorithm, let $b$ and $\ell$ be the big and least spenders. If the allocation is not pEF1, then $b$ pEF1-envies $\ell$ (Lemma~\ref{lem:BSenviesLS}). We consider cases based on the relative positions of $b$ and $\ell$. First we argue that $b$ and $\ell$ cannot lie in the same group, by showing that:

\begin{restatable}{lemma}{compsarepEFone}\label{lem:compsarepEF1}
Throughout the run of Algorithm~\ref{alg:ef1po}, each group $N_r$ remains pEF1.
\end{restatable}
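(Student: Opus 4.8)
The plan is to prove the statement by induction on the number of operations (price raises and chore transfers) performed by Algorithm~\ref{alg:ef1po} after the call to $\s{MakeInitGroups}$. The base case is the initial allocation, for which every group is pEF1 by Lemma~\ref{lem:initcompprop}(v). For the inductive step I assume that every group $N_1,\dots,N_R$ is pEF1 immediately before an operation and show each group is still pEF1 afterward. A useful preliminary observation is that, by the inductive hypothesis together with Lemma~\ref{lem:BSenviesLS}, whenever the current allocation is not pEF1 the big spender $b$ pEF1-envies the least spender $\ell$; since no agent pEF1-envies an agent in its own group, $b$ and $\ell$ lie in distinct groups. Hence every transfer is between two different groups, and I may analyze its effect on the ``losing'' group and the ``gaining'' group separately.

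I would then dispatch the two easy operation types. For a price raise (Line~11), all chores owned by agents of the raised group $N_r$ are scaled by the common factor $k$, so every spending $\p(\x_i)$ with $i\in N_r$ is multiplied by $k$; since the pEF1 condition inside $N_r$ is a conjunction of inequalities $\p(\x_i\setminus j)\le \p(\x_h)$ that are invariant under a common positive scaling, $N_r$ stays pEF1, while all other groups are untouched. For a direct transfer of a chore from the big spender $b\in N_r$ to the least spender $\ell\in N_s$ (Lines~14 and~22), I would isolate two monotonicity facts. \emph{Least spender gains:} after $\ell$ receives the chore, removing it recovers $\ell$'s previous spending, which was minimal, so $\ell$ does not pEF1-envy any $h\in N_s$; and since $\p(\x_\ell)$ only increases, no $h\in N_s$ begins to pEF1-envy $\ell$, so $N_s$ stays pEF1. \emph{Big spender loses:} by the definition of big spender, $\min_{j\in\x_b}\p(\x_b\setminus j)=\p(\x_b)-\max_{j\in\x_b}p_j\ge \min_{j'\in\x_h}\p(\x_h\setminus j')$ for every $h$; since the transferred chore has price at most $\max_{j\in\x_b}p_j$, the new spending of $b$ still dominates $\min_{j'\in\x_h}\p(\x_h\setminus j')$, so no $h\in N_r$ pEF1-envies $b$, and $b$'s decreased spending prevents $b$ from pEF1-envying anyone, leaving $N_r$ pEF1.

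The remaining and most delicate case is the two-step transfer (Lines~24--26), where $b\in N_r$ hands a chore to an intermediate agent $i\in N_{r'}$ (with $r'$ unraised), $i$ hands the chore $j\in\x^0_\ell$ to $\ell\in N_s$, and $\ell$ gains $j$. The effect on $N_s$ is covered by the ``least spender gains'' fact and the effect on $N_r$ by the ``big spender loses'' fact exactly as above, so the whole difficulty concentrates on $i$'s group $N_{r'}$, which must absorb $i$ simultaneously losing $j$ and receiving a chore from $b$. The plan is to show that these two chores carry the \emph{same} current price, so that the multiset of prices in $i$'s bundle—hence $i$'s total spending and every quantity $\p(\x_i\setminus j')$—is unchanged, and therefore $N_{r'}$ remains pEF1. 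Establishing this price equality is where I expect the main work: it requires the price bookkeeping of the bivalued market, namely that because $r'\in U$ the group $N_{r'}$ has not been raised, that all transfers keep the outcome on mBB so that every price is a power of $k$, and that $j\in\x^0_\ell$ together with the choice of the chore $b$ releases pins both prices to the same level.

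The hard part, therefore, is the two-step transfer, and specifically the claim that agent $i$'s price-multiset is preserved: all the other verifications reduce to the two scale/monotonicity arguments, whereas this one depends on the detailed invariant describing which groups have been raised and what prices their (possibly reshuffled) chores currently carry. I would isolate this as a self-contained price lemma, prove it first, and then feed it into the induction, treating the degenerate situations (a bundle becoming a singleton or empty, or two of $b,\ell,i$ sharing a group) as special cases of the same inequalities.
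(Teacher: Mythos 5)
Your induction skeleton and your two ``easy'' cases coincide with the paper's own proof of Lemma~\ref{lem:compsarepEF1}: the paper likewise argues by induction over events, disposes of price-rises by scale-invariance of pEF1 within the raised group, handles a direct BS-to-LS transfer by exactly your two monotonicity facts, and uses your observation that $b$ and $\ell$ must lie in distinct groups (otherwise the loop has already terminated with a pEF1 allocation). The genuine gap is in the two-step transfer (Lines 22--24): you correctly identify that everything reduces to showing that the chore $j\in\x^0_\ell$ leaving the intermediate agent $i$ and the chore arriving at $i$ from $b$ carry the \emph{same} price, so that $\p(\x_i)$ is unchanged --- but you then defer this ``price lemma'' and never prove it. This is not a routine verification that can be waved through: it is the entire content of the hard case, and your sketch of it is too weak. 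Knowing that every price is a power of $k$ does not pin the two prices to a common value; if the chore from $b$ had price $1$ while $j$ had price $k$ (or vice versa), $i$'s spending would jump and the induction would not close.

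What actually closes the argument (and what the paper proves, splitting the after-$T$ part into Lemma~\ref{lem:afterTinvariant}(iii) with base case supplied by Lemmas~\ref{lem:aboveHr} and~\ref{lem:belowHr}) is that \emph{both} prices equal exactly $k$, and this rests on the earlier invariants rather than on generic mBB bookkeeping. First, two-step transfers occur only after $T$, when $b$ lies in the raised group $N_r$; agents of $N_r$ never receive chores (before its price-rise by Lemma~\ref{lem:priortot}(iii), and afterwards because recipients are always least spenders in unraised groups or the intermediate unraised agent $i$), so every chore of $b$ is one of its original low-cost chores of $H_r$, priced $k$ after the raise --- high-cost chores all lie in $H_R$ by Lemma~\ref{lem:initcompprop}(vii), and $N_R$ is never raised. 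Second, $\ell$ lost $j$ prior to $T$, and prior to $T$ chores are transferred only out of a big spender whose group is already raised (Line 13 is reached only when $r\notin U$), so $j$ was priced $k$ at the moment of that transfer; its price never changes afterwards, since its new owner sits in an unraised group, any group raised later has received no chores by Lemma~\ref{lem:priortot}(iii), and after $T$ there are no price-rises at all. Only with this does the two-step transfer ``collapse'' to a direct $b$-to-$\ell$ transfer and your inductive step go through; as written, your proposal names the right claim but leaves the one nontrivial step of the lemma unestablished.
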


Hence $b$ and $\ell$ must lie in different groups. Once again, since we want to transfer chores away from $b$ to reduce the pEF1-envy, and we want to obtain an fPO allocation, we only transfer chores along mBB edges. Doing so may require that we raise the prices of all chores belonging to certain agents in order to create new mBB edges to facilitate chore transfer. In our algorithm, \textit{all} agents in a group undergo price-rise together. We call a group $N_r$ a \textit{raised group} if its agents have undergone price-rise, else it is called an \textit{unraised group}. The set $U$ (Line 2) records the set of unraised components.

We will use the terms \textit{time-step} or \textit{iteration} interchangeably to denote either a chore transfer or a price-rise step. We say `at time-step $t$', to refer to the state of the algorithm \textit{just before} the event at $t$ happens. We denote by $(\x^{t},\p^t)$ the allocation and price vector at time-step $t$.

Let $T$ be the first time-step that the current LS enters a raised group. Note that such an event may or may not happen. Our algorithm performs instructions in Lines 6-14 before $T$, and Lines 15-24 after $T$, as we describe below.

\subsection{Algorithm prior to $T$ (Lines 6-14)}\label{sec:priortot}
We first record some properties of the algorithm prior to $T$. These observations follow directly from the algorithm.

\begin{restatable}{lemma}{algobs}\label{lem:algobs}
Prior to $T$, the following hold:
\begin{enumerate}
\item Any transfer of chores only takes place directly from the big spender $b$ to the least spender $\ell$. Thus, an agent receives a chore only if she is a least spender, and an agent loses a chore only if she is a big spender.
\item An agent ceases to be a least spender only if she receives a chore. An agent ceases to be a big spender only if she loses a chore.
\item A group undergoes price-rise at $t$ only if the group contains the big spender at $t$.
\end{enumerate}
\end{restatable}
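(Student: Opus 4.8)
The plan is to read all three claims directly off the first \texttt{while} loop (Lines 6--14) of Algorithm~\ref{alg:ef1po}, since ``prior to $T$'' is exactly the period during which this loop runs: $T$ is the first time the least spender's group becomes raised, which is precisely when the loop guard $\ell \in U$ first fails. Throughout this period the only two actions available are the price-rise of Line 11 and the chore transfer of Line 14. Observations 1 and 3 are then immediate by inspection: the sole transfer (Line 14) always moves a chore from the current big spender $b$ to the current least spender $\ell$, so an agent can gain a chore only while serving as $\ell$ and lose one only while serving as $b$; and the sole price-rise (Line 11) is always applied to $N_r$, the group containing $b$.

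For Observation 2 I would argue by cases on the action taken at a time-step $t$, tracking how spending changes. Consider first the least spender $\ell$ at $t$. If $t$ is a transfer, Line 14 makes $\ell$ the recipient, so $\ell$ receives a chore, consistent with the claim. If $t$ is a price-rise, the repriced chores are exactly those owned by agents of $N_r$; since the loop runs only while the allocation is not pEF1, the big spender pEF1-envies the least spender, and as each group stays pEF1 (Lemma~\ref{lem:compsarepEF1}) they cannot share a group, i.e.\ $\ell \notin N_r$ (this is the relation $r<s$ recorded in Line 9). Hence none of $\ell$'s chores are repriced, $\p(\x_\ell)$ is unchanged, and since a price-rise only weakly increases every other agent's spending, $\ell$ — who had minimum spending — remains a minimum spender. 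Thus $\ell$ ceases to be the least spender only through a transfer in which she receives a chore.

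Next consider the big spender $b$ at $t$. If $t$ is a transfer, Line 14 removes a chore from $b$, consistent with the claim. If $t$ is a price-rise, all of $N_r$'s chores (hence all of $b$'s) are scaled by $k$, so $\min_{j\in\x_i}\p(\x_i\setminus j)$ scales by exactly $k$ for every $i\in N_r$ and is unchanged for $i\notin N_r$. Since $b$ attained the maximum of this quantity before the rise and $k\ge 1$, afterwards $b$'s value equals $k$ times the old maximum, which dominates both the scaled values of the other $N_r$-agents and the unchanged values of agents outside $N_r$; so $b$ remains the big spender. Hence $b$ ceases to be the big spender only through losing a chore.

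The entire content sits in the two price-rise cases of Observation 2 — showing that repricing $N_r$ does not flip the identity of the least or big spender — while Observations 1 and 3 and both transfer cases are pure inspection. The one point needing minor care is consistent tie-breaking: I would state the conclusions at the level of ``is a minimum/maximum spender'' and then invoke the fixed rule used to pin down the unique LS and BS, so that ``remaining a minimizer/maximizer'' indeed yields ``remaining the least/big spender.''
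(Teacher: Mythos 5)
Your proof is correct and matches the paper's intent: the paper offers no explicit proof of this lemma, stating only that ``these observations follow directly from the algorithm,'' and your argument (pure inspection of Lines 6--14 for Observations 1 and 3, plus spending-monotonicity under price-rises for Observation 2) is precisely the rigorous filling-in of that claim, including the appropriate caveat about fixing a consistent tie-breaking rule for the LS and BS. Your appeal to Lemma~\ref{lem:compsarepEF1} to conclude $\ell \notin N_r$ during a price-rise step is legitimate and non-circular, since that lemma's inductive proof relies only on Observation 1 (transfers go from BS to LS), which you establish by inspection alone.
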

We now show that:
\begin{restatable}{lemma}{BSincompofformerLS}\label{lem:BSincompofformerLS}
If at any point in the run of Algorithm~\ref{alg:ef1po} prior to $T$, the big spender lies in a group which contains a former least spender, then the allocation is pEF1.
\end{restatable}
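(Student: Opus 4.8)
The plan is to reduce the statement to a single inequality between the current big spender and least spender, and then to pin down the current spending of the former least spender. Throughout, fix the time-step $t$ (prior to $T$) in question, let $b$ be the current big spender, $\ell$ the current least spender, $N_r$ the group of $b$, and let $a\in N_r$ be the former least spender guaranteed by the hypothesis.

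First I would invoke Lemma~\ref{lem:BSenviesLS} in contrapositive form: the allocation fails to be pEF1 only if $b$ pEF1-envies $\ell$. Hence it suffices to show that under the hypothesis $b$ does \emph{not} pEF1-envy $\ell$, i.e. $\min_{j\in\x_b}\p(\x_b\setminus j)\le \p(\x_\ell)$. To bound the left-hand side I would use Lemma~\ref{lem:compsarepEF1}: the group $N_r$ stays pEF1 throughout the run, and since $a,b\in N_r$ the big spender does not pEF1-envy $a$, giving $\min_{j\in\x_b}\p(\x_b\setminus j)\le \p(\x_a)$. Because $\ell$ is a least spender we always have $\p(\x_\ell)\le\p(\x_a)$, so the whole statement reduces to proving the reverse inequality $\p(\x_a)\le\p(\x_\ell)$; equivalently, that the former least spender $a$ is \emph{still} a least spender at time $t$.

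The crux is therefore to track the spending of $a$ from the last step $t_0$ at which it was the least spender up to the current step $t$, and to show it never rises strictly above the global minimum. I would organize this using Lemma~\ref{lem:algobs}: prior to $T$, $a$ gains chores only while it is itself the least spender and loses chores only while it is the big spender, and in addition its group may undergo a single price-rise. I would combine this with Lemma~\ref{lem:spendingofLS} (the minimum spending is non-decreasing) and the structural fact that prior to $T$ the least spender always sits in an unraised group, so that the raised groups form an initial segment $N_1,\dots,N_q$ and the big spender descends through them; in particular, at $t_0$ the group $N_r$ was unraised. The aim is to argue that every increment to $\p(\x_a)$ after $t_0$—either a price-$k$ chore received from a higher, already-raised group while $a$ was the least spender, or the eventual raise of $N_r$ itself—is matched by a simultaneous increase of the global minimum, forcing $\p(\x_a)=\p(\x_\ell)$ at time $t$. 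Plugging this into the previous paragraph yields $\min_{j\in\x_b}\p(\x_b\setminus j)\le\p(\x_\ell)$, i.e. pEF1, which is what we want.

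The main obstacle I anticipate is exactly this last step: ruling out that $a$'s spending \emph{overshoots} the minimum. Two delicate points stand out: (a) the behavior across the price-rise of $a$'s own group $N_r$, which scales $\p(\x_a)$ (and the spending of all of $N_r$) by $k$ while the least spender sits in a different, still-unraised group; and (b) excluding the scenario in which $a$, while least spender, receives a single price-$k$ chore that lifts it above the minimum and then stays frozen there while the minimum lags behind. I expect to control both by exploiting the prefix structure of the raised groups together with the ordering $r<s$ of the big- and least-spender groups, and most likely by a contradiction argument driven by the monotonicity of the minimum spending in Lemma~\ref{lem:spendingofLS}. If the direct identity $\p(\x_a)=\p(\x_\ell)$ proves too strong to maintain in all intermediate states, the fallback is to bound $\min_{j\in\x_b}\p(\x_b\setminus j)$ against $\p(\x_\ell)$ directly through the chain of pEF1 relations inside $N_r$ and the history of $a$, rather than routing the inequality through a single comparison with $a$.
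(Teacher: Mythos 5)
There is a genuine gap, and it sits exactly where you anticipated: your reduction ``it suffices to show that the former least spender $a$ is \emph{still} a least spender at time $t$'' asks for something that is false in general. Prior to $T$, an agent stops being the least spender precisely because she receives a chore from a big spender in a \emph{raised} group, i.e., a chore of price $k$; her spending therefore jumps from the then-minimum $s_0$ to $s_0+k$, strictly above other agents who may still be sitting at or near $s_0$. Nothing forces the global minimum to catch up to $s_0+k$ by time $t$ (it only rises when the current LS receives chores), so $\p(\x_a)\le\p(\x_\ell)$ cannot be established. Worse, even the part of your chain that is true is too weak: group-pEF1 at time $t$ only gives $\min_{j\in\x_b}\p(\x_b\setminus j)\le\p(\x_a)=s_0+k$, whereas the lemma requires a bound by the current minimum, which may still equal $s_0$. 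So routing the inequality through $a$'s \emph{current} spending cannot work no matter how the second half is argued.

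The paper's proof avoids this by never comparing at time $t$ on the $a$-side: it takes $t$ to be the \emph{first} time the hypothesis holds, lets $t'$ be the last time the former LS $i$ was the least spender, and proves the chain $\p^t(\x^t_b\setminus j)\le\p^{t'}(\x^{t'}_b\setminus j')\le\p^{t'}(\x^{t'}_i)\le\p^t(\x^t_\ell)$. The middle inequality uses group-pEF1 \emph{at time $t'$}, i.e., against $i$'s spending \emph{before} it received the price-$k$ chore; the last is Lemma~\ref{lem:spendingofLS}. The first inequality is where the two missing ingredients enter: (a) minimality of $t$ rules out any price-rise of $N_r$ during $(t',t)$ (such a price-rise would itself have placed the BS in a group containing the former LS $i$ at an earlier time), and (b) a separate claim shows that $b$ itself cannot be a former LS unless the allocation is already pEF1, whence $b$ never received a chore after $t'$ and its spending-up-to-one-chore only decreased. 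Your fallback sentence points vaguely in this direction (``the history of $a$,'' ``the chain of pEF1 relations inside $N_r$''), but it contains none of these three ideas --- the minimal choice of $t$, the comparison at $t'$ rather than $t$, and the treatment of the case where $b$ is itself a former LS --- and without them the argument does not go through.
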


This allows us to show that:
\begin{restatable}{lemma}{priortot}\label{lem:priortot}
Prior to $T$, the following hold:
\begin{enumerate}
\item[(i)] Let $r$ be the number of price-rise steps until time-step $t$, where $t<T$. Then the raised groups are exactly $N_1,\dots,N_r$. Furthermore they underwent price-rise exactly once and in that order.
\item[(ii)] For any chore $j$ allocated to an agent in a raised group $N_r$ and any agent $i$ in an unraised group $N_{r'}$, where $r'>r$, $j \in \mBB_i$.
\item[(iii)] For each $r'\in[r]$, at the time of price-rise of $N_{r'}$, no agent in $N_{r'}$ has either received or lost a chore since the initial allocation.
\end{enumerate}
\end{restatable}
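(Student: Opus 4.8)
The plan is to prove the three parts simultaneously by induction on the time-steps prior to $T$ (equivalently, on the number of completed price-rises), carrying as a joint invariant that the raised groups form a prefix $N_1,\dots,N_q$ raised in that order, that every not-yet-raised group still carries its initial bundle up to the instant it is raised, and that the mBB edges asserted in (ii) are present. Parts (ii) and (iii) are comparatively direct once this invariant is in place; the ordering claim (i) is where the real work lies.

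For part (iii), I would first invoke Lemma~\ref{lem:algobs}: prior to $T$ an agent loses a chore only while it is the big spender, receives a chore only while it is the least spender, and a group is raised only when it contains the current big spender. Hence when $N_{r'}$ is raised the big spender lies in $N_{r'}$. If some agent of $N_{r'}$ had already received a chore it would be a \emph{former least spender}, so the big spender would occupy a group containing a former least spender, and Lemma~\ref{lem:BSincompofformerLS} would force the allocation to be pEF1 and the loop to have already terminated -- contradicting that we are performing a price-rise. For the ``lost a chore'' case, an agent of $N_{r'}$ can shed a chore only while being the big spender; but the first time an agent of the still-unraised $N_{r'}$ becomes the big spender the algorithm executes the price-rise in Line~11 rather than a transfer, so nothing can have left $N_{r'}$ beforehand. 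Thus at its price-rise $N_{r'}$ holds exactly its initial bundle.

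Part (ii) then follows from (iii) and Lemma~\ref{lem:initcompprop}. Because transfers prior to $T$ go only to the least spender, which sits in an unraised group, raised groups never receive chores; so any chore $j$ currently held by an agent of a raised group $N_r$ already belonged to $H_r$ initially. Since $r<r'\le R$ forces $r<R$, property (vii) makes $j$ low-cost, hence priced $k$ after the single price-rise of $N_r$. For $i\in N_{r'}$ with $r'>r$, property (vi) gives $c_{ij}=k$, so $\alpha_{ij}=c_{ij}/p_j=1$. It remains to check $\alpha_i=1$: raising a group above $i$ only converts the ratios of its (cost-$k$, by (vi)) chores from $k$ to $1$ and leaves every other ratio unchanged, so the minimum ratio of $i$ stays at its initial value $1$, and therefore $j\in\mBB_i$.

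The heart of the matter is part (i), which I would establish by the induction above. Assuming the first $q$ raises were of $N_1,\dots,N_q$ in order, I want the next raise (if it precedes $T$) to be of $N_{q+1}$. A raise is applied to whichever unraised group currently holds the big spender, so it suffices to show that among the unraised groups $N_{q+1},\dots,N_R$ the big spender can appear only in $N_{q+1}$. By part (iii) each not-yet-raised group keeps its initial bundle until raised, so its big-spender spending (up to removal of the largest chore) equals the initial value $f(\cdot)$, and by Lemma~\ref{lem:initcompprop}(iv) these values weakly decrease in the group index, so $N_{q+1}$ dominates the other unraised groups. The delicate point -- and the step I expect to be the main obstacle -- is the least spender's group, the one unraised group that \emph{does} accumulate chores: one must show this accumulation never lets a lower group overtake $N_{q+1}$, and that after enough transfers the big spender actually descends from the raised block into $N_{q+1}$ rather than stalling in the raised block or skipping a group. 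I would control this using Lemma~\ref{lem:spendingofLS} (the least spender's spending is monotone nondecreasing), the fact that each transfer out of the raised block strictly lowers its big-spender spending, and Lemma~\ref{lem:BSincompofformerLS} to exclude the degenerate configuration in which a former least spender shares the big spender's group, together with a careful tie-break analysis ensuring the descent lands exactly on $N_{q+1}$.
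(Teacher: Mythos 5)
Your proposal is correct and follows essentially the same route as the paper: the same induction over price-rise steps, with part (iii) obtained from Lemma~\ref{lem:algobs} together with Lemma~\ref{lem:BSincompofformerLS}, part (ii) from the bivalued cost/price structure of Lemma~\ref{lem:initcompprop}, and part (i) from the monotonicity of the initial big-spender spendings across groups. The ``delicate point'' you flag in (i) is resolved exactly by the tool you name---if the big spender ever sits in an unraised group containing a former least spender, Lemma~\ref{lem:BSincompofformerLS} forces the allocation to be pEF1 and the algorithm terminates rather than performing a misplaced price-rise---so your treatment of that step is, if anything, more explicit than the paper's, which simply asserts that the unraised big spender lies in $N_{r+1}$.
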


\begin{proof}
We prove (i), (ii) and (iii) by induction. For $r=0$, they are trivially true since there are no raised groups. Assume that at some time-step $t$, (i) groups $N_1,\dots,N_{r}$ have undergone price-rise, once and in that order, for some $r\ge 1$, and (ii) and (iii) hold. 

Note that our algorithm only raises the prices of chores owned by a group if the group contains the big spender at the time (Lemma~\ref{lem:algobs}). If the current BS is in a raised group, then the induction hypothesis ensures that there is an mBB edge from the LS (who is in an unraised group prior to $T$) to chores owned by the BS. The algorithm therefore performs a direct chore transfers and no price-rise is necessary.

If eventually the BS enters an unraised group, then a price-rise step is potentially necessary. Suppose $b \in N_{r+1}$ is an agent who has received a chore prior to time-step $t$. If this happens then $b$ must have been a former LS by Lemma~\ref{lem:algobs}. Then Lemma~\ref{lem:BSincompofformerLS} shows that the allocation must already be pEF1. 

Hence we assume $b$ has not received a new chore since the initial allocation in Line 1. Furthermore since $b$ pEF1-envies the LS $\ell$, it must be the case that $\ell\in N_s$ where $s\ge r+2$. Lemma~\ref{lem:initcompprop} shows that there is no mBB edge from $\ell$ to chores owned by $b$, hence a direct chore transfer is not possible and it is necessary for $N_{r+1}$ to undergo price-rise. This shows (i).

Now if an agent $i\in N_{r+1}$ had previously received a chore, then $i$ is a former LS at $t$. Then $b\in N_{r+1}$ is in a group containing a former LS $i$, hence Lemma~\ref{lem:BSincompofformerLS} shows that the allocation is pEF1. Similarly Lemma~\ref{lem:algobs} shows that no agent in $N_{r+1}$ can have lost a chore. This is because only BS agents lose chores. Prior to $t$, no agent of $N_{r+1}$ can be the BS. Hence no agent in $N_{r+1}$ has received or lost a chore since the initial allocation at the time $N_{r+1}$ undergoes price-rise, thus showing (iii). 

The algorithm next raises the prices of all chores owned by $N_{r+1}$ by a factor of $k$, and $N_{r+1}$ becomes a raised group. Consider an agent $i\in N_{r'}$ for $r'\ge r+2$ and a chore allocated to an agent in $N_{r+1}$. Since the mBB ratio only changes upon a price-rise, the mBB ratio of $i$ is 1 since $N_{r'}$ does not undergo a price-rise before $N_{r+1}$. 

Observe that since $i\notin N_{r+1}$, there is no alternating path from agents in $N_{r+1}$ to $i$. Hence $j\notin \mBB_i$ before the price-rise. Thus $c_{ij}/p^t_j > 1$, showing $c_{ij} = k$ and $p^t_j = 1$. After the price-rise, we have that $p^{t+1}_j = k$, and $\alpha_i = c_{ij}/p^{t+1}_j$. Thus, $j\in \mBB_i$ after the price-rise, which shows (ii).
\end{proof}

To summarize the behavior of the Algorithm prior to $T$, we have argued in the above proof that if the allocation is not pEF1, we can always (i) transfer a chore directly from $b$ to $\ell$, or (ii) perform a price-rise on the group of $b$ and then transfer a chore from $b$ to $\ell$. Further, we argue that the algorithm makes progress towards getting a pEF1 allocation. 
\begin{lemma}\label{lem:priortoTtermination}
Algorithm~\ref{alg:ef1po} performs at most $\poly{n,m}$ steps prior to $T$.
\end{lemma}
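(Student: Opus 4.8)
The plan is to bound separately the two kinds of steps the algorithm takes before $T$: price-rise steps and chore-transfer steps. Bounding the price-rises is immediate from Lemma~\ref{lem:priortot}(i), which says the raised groups prior to $T$ are exactly $N_1,\dots,N_r$, each raised once and in that order; hence there are at most $R \le n$ price-rise steps. The real content is therefore to bound the number of chore transfers, and this is where I would concentrate.

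For the transfers I would use a potential built from the least spender's spending. Let $\mu = \min_{i} \p(\x_i)$ be the current minimum spending and $N_\mu = |\{i : \p(\x_i) = \mu\}|$ the number of agents attaining it. By Lemma~\ref{lem:spendingofLS}, $\mu$ is non-decreasing over the entire run. I would show that every transfer prior to $T$ either strictly increases $\mu$ or leaves $\mu$ fixed while strictly decreasing $N_\mu$. The ingredients are: (1) prior to $T$ a transfer (Line 14) happens only when the big spender $b$'s group is already raised, and moves a single chore $j$ of strictly positive price directly from $b$ to the least spender $\ell$ (Lemma~\ref{lem:algobs}; since $j$ lies in a raised group and is on an mBB edge to the unraised $\ell$ whose mBB ratio is $1$, one in fact checks $p_j=k$); (2) $\ell$ is by definition a minimum spender, so after gaining $j$ its spending rises strictly above $\mu$; (3) because the loop runs while $(\x,\p)$ is not pEF1, the big spender pEF1-envies $\ell$ (the argument of Lemma~\ref{lem:BSenviesLS}), so $\p(\x_b\setminus j) > \p(\x_\ell)=\mu$, meaning $b$'s spending after shedding $j$ stays strictly above $\mu$; and (4) no other agent's bundle changes. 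Thus the only agent leaving the minimum level is $\ell$, and no agent drops to or below it, which yields the dichotomy.

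It then remains to bound how often each alternative can occur. Since $N_\mu \le n$, between any two consecutive strict increases of $\mu$ there are at most $n$ transfers. To bound the number of strict increases of $\mu$, I would argue that $\mu$ ranges over only polynomially many distinct values: all chore prices prior to $T$ lie in $\{1,k,k^2\}$ (in fact in $\{1,k\}$, since the only group containing high-cost chores is $N_R$, which cannot be raised while the least spender is still in an unraised group, i.e.\ strictly before $T$), so each agent's spending is a sum of at most $m$ such prices and hence takes at most $\poly{m}$ distinct values. As $\mu$ is monotone non-decreasing, it increases strictly at most $\poly{m}$ times. Combining, the number of transfers is at most $n\cdot\poly{m} = \poly{n,m}$, and together with the at most $n$ price-rises this gives $\poly{n,m}$ steps prior to $T$.

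The step I expect to be the main obstacle is establishing the dichotomy cleanly — in particular verifying ingredient (3), that the big spender does not itself fall to the minimum spending level after losing a chore and thereby spoil the monotone potential. This is exactly where the pEF1-envy inequality $\p(\x_b\setminus j) > \mu$ is essential, and I would want to confirm it holds for the \emph{specific} chore that is transferred, which is guaranteed to be on an mBB edge to $\ell$ (and of price $k$) rather than for an arbitrary chore of $b$.
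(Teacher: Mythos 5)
Your proof is correct, but it takes a genuinely different route from the paper's. The paper's own argument is a direct chore-counting one: prior to $T$, an agent in a raised group can never \emph{gain} a chore (she would have to be the LS in a raised group, contradicting $t<T$), and every transfer step moves a chore out of a raised group (transfers only happen when the BS's group is already raised; otherwise a price-rise occurs). Hence, between consecutive price-rises, the number of chores held by raised-group agents strictly decreases with every transfer; each of the at most $R\le n$ price-rises injects at most $m$ chores into the raised pool, giving $O(nm)$ transfers plus $O(n)$ price-rises. Your argument instead bounds price-rises by Lemma~\ref{lem:priortot}(i) and then uses a lexicographic potential $(\mu, -N_\mu)$ on the least spender's spending: pEF1-envy ($\p(\x_b\setminus j) > \p(\x_\ell)=\mu$ for \emph{every} $j\in\x_b$, so your closing worry about the specific transferred chore is unfounded --- pEF1-envy is by definition universal over $b$'s bundle) keeps the BS strictly above $\mu$, the LS strictly rises above $\mu$, and no one else moves, so each transfer strictly advances the potential; the bivalued price structure (prices in $\{1,k\}$ before $T$, by Lemma~\ref{lem:initcompprop}(vii) and Lemma~\ref{lem:priortot}(iii)) caps the number of distinct values of $\mu$ at $O(m^2)$, yielding $O(nm^2)$ transfers. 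The trade-off: the paper's count is more elementary and tighter ($O(nm)$ vs.\ $O(nm^2)$), and does not need prices to take few values; your potential leans on the bivalued structure but is self-contained given Lemma~\ref{lem:spendingofLS} and Lemma~\ref{lem:BSenviesLS}. One small step you should make explicit: the potential must also survive the price-rise steps interleaved with the transfers, i.e., a price-rise must not increase $N_\mu$ at fixed $\mu$. This is immediate --- a price-rise never decreases any agent's spending, and the designated LS sits in an unraised group so $\mu$ itself is unchanged --- but it is needed for your claim that at most $n$ transfers separate consecutive strict increases of $\mu$.
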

\begin{proof}
Prior to $T$, the LS always remains in an unraised group. Chores are transferred away from agents who become big spenders in raised groups. Once an agent undergoes price-rise, she cannot gain any additional chores, since doing so would mean she is the LS in a raised group, which cannot happen prior to $T$. When the BS is in an unraised group, the group undergoes a price-rise. Thus, effectively, either agents in raised components only lose chores, or the BS `climbs-down' in the group list $N_1,\dots,N_R$, while the LS remains below the BS. Since there are $R\le n$ groups, and at most $m$ chores allocated to raised groups, after $\poly{n,m}$ steps either of two events happen: (i) the LS and BS both belong to the same group, or (ii) the LS enters a raised group. In the former case, the allocation is pEF1 due to Lemma~\ref{lem:compsarepEF1}, and the algorithm terminates in $\poly{n,m}$ steps. We discuss the latter case in the next section. Thus, there are at most $\poly{n,m}$ steps prior to $T$.
\end{proof}

\subsection{Algorithm after $T$ (Lines 15-24)}\label{sec:aftert}
We now describe the algorithm after $T$, i.e., once the LS enters a raised group (Lines 15-24). We show that subsequent to $T$, as long as the allocation is not pEF1, we can either (i) transfer a chore directly from $b$ to $\ell$, or (ii) transfer chores via an alternating path containing 3 agents. We do not perform any price-rises subsequent to $T$.

From Lemma~\ref{lem:priortot}, we know that at $T$, groups $N_1,\dots,N_r$ have undergone price-rise, for some $r\in[R]$. Let $N_{<r} = \bigcup_{r'< r} N_{r'}$, and $N_{>r} = \bigcup_{r'>r} N_{r'}$. The allocation at $T$ need not be pEF1, but we argue that it is already very close to being pEF1. Specifically, we show:

\begin{restatable}{lemma}{aboveHr}\label{lem:aboveHr}
At $T$, agents in $N_{<r}$ are pEF1 towards others.
\end{restatable}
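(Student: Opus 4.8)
The plan is to reduce the statement to a single spending comparison and then read it off from a tandem-draining structure of the raised groups. Since the least spender $\ell$ minimizes $\p(\x_i)$ globally, an agent $i$ is pEF1 towards every other agent exactly when $\min_{j\in\x_i}\p(\x_i\setminus j)\le \p(\x_\ell)$. Thus it suffices to prove that at $T$ every $i\in N_{<r}$ satisfies $\min_{j\in\x_i}\p(\x_i\setminus j)\le \p(\x_\ell)$. I would further use that, by Lemma~\ref{lem:compsarepEF1}, every raised group $N_{r'}$ is pEF1 with all its chores priced $k$; hence the chore counts of its agents differ by at most one, and both its minimum spending and its maximum post-removal spending equal $k\mu_{r'}$, where $\mu_{r'}$ is the smallest chore count in $N_{r'}$. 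The target inequality then becomes: for all $r'<r$, $k\mu_{r'}\le \p(\x_\ell)$ (the case $r=1$ being vacuous).

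The heart of the argument is a structural invariant about how the raised groups are drained, which I would establish by induction on the number of price-rises using Lemma~\ref{lem:priortot}, Lemma~\ref{lem:algobs} and Lemma~\ref{lem:compsarepEF1}. Prior to $T$ the least spender stays in an unraised group, so by Lemma~\ref{lem:algobs} a raised group never gains a chore and its agents lose a chore only while being the big spender, i.e.\ only while the group realizes the global maximum post-removal spending. Combined with Lemma~\ref{lem:priortot}(iii) (each group is in its initial state at its own price-rise), this yields: when $N_{r'}$ is raised its maximum post-removal spending jumps to $kf(r')$, which by Lemma~\ref{lem:priortot}(i) and the choice of big spender is the new global maximum; subsequent transfers remove chores from whichever raised agent currently realizes this global maximum, so $N_{r'}$ descends in steps of $k$ until it meets the common level of the previously raised groups, after which all raised groups descend in tandem and remain at a single common level. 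Because all raised chores have price exactly $k$, every raised post-removal spending is a multiple of $k$ and these levels align exactly. In particular, at the moment $\tau_r$ just before $N_r$ is raised, the big spender lies in $N_r$ (still in its initial state) with value $f(r)$, so all of $N_1,\dots,N_{r-1}$ sit at one common level $B^{*}\le f(r)$.

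Finally I would combine this invariant with the event at $T$. Between $\tau_r$ and $T$ the groups $N_{<r}$ only lose chores, and they do so only while realizing the global maximum post-removal spending, so they never rise above $B^{*}$ and stay balanced with one another, while $N_r$, freshly raised to $kf(r)\ge B^{*}$, descends toward them. Hence at $T$ the lowest-spending raised agents are exactly those in $N_{<r}$ (or, once $N_r$ has caught up, the whole raised pool at one common level). Since the least spender enters a raised group at $T$, its spending $\p(\x_\ell)$ equals this lowest common raised level, which is precisely $k\mu_{r'}$ for every $r'<r$. Therefore $\min_{j\in\x_i}\p(\x_i\setminus j)=k\mu_{r'}=\p(\x_\ell)\le\p(\x_h)$ for each $i\in N_{<r}$ and every agent $h$, giving the desired pEF1 condition. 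The main obstacle is making the tandem-draining invariant rigorous: proving that once two raised groups reach a common level they descend together, that a newly raised group quickly rejoins this pool, and that the discrete size-$k$ transfers together with the within-group pEF1 guarantee of Lemma~\ref{lem:compsarepEF1} keep the levels exactly aligned; care is also needed to treat the two cases at $T$, namely whether or not $N_r$ has already descended to the common level $B^{*}$.
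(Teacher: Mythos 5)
You take the same basic route as the paper---track how the raised groups are drained, exploit that all chores of raised agents are priced $k$ so their spendings are multiples of $k$, and conclude pEF1 against the least spender, who sits in a raised group at $T$---but your proof has a genuine gap, and it is precisely the step you defer as ``the main obstacle.'' The exact-alignment invariant (all previously raised groups sit at a single common level $B^{*}$, and the levels ``align exactly'') is not only unproven; it is false under the paper's arbitrary tie-breaking rule for selecting the big spender. When a raised agent's post-removal spending ties the maximum post-removal spending among unraised agents, the tie can be broken toward the raised agent, which drains her one extra step of $k$; consequently agents of $N_{<r}$ can legitimately occupy \emph{two} adjacent levels differing by exactly $k$. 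The paper's own proof exhibits exactly this: at the time $t$ when $N_r$ is raised, each $i\in N_{<r}$ spends $gk$ with $g\in\{(c-1)/k,\,1+(c-1)/k\}$ whenever $k\mid(c-1)$, where $c=\p^t(\x^t_h)$ and $h$ is the big spender in $N_r$; a unique level exists only when $k\nmid(c-1)$. For the same reason your opening reduction is too strong: the inequality $k\mu_{r'}\le\p(\x_\ell)$ fails when every agent of some raised group $N_{r'}$ holds $\mu+1$ chores while the least spender holds $\mu$ (then $k\mu_{r'}=k(\mu+1)>k\mu=\p(\x_\ell)$), yet pEF1 still holds there because the post-removal spending is $k\mu$. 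What you actually need, and all you need, is $\max_{i\in N_{<r}}\min_{j\in\x_i}\p(\x_i\setminus j)\le\p(\x_\ell)$, i.e., that all spendings in $N_{<r}$ lie within $k$ of one another and of the least spender's.

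The paper obtains this ``within $k$'' property without any tandem claim, by two pointwise bounds at time $t$: an upper bound $\p^t(\x^t_i)\le k+k\lfloor(c-1)/k\rfloor$, from the fact that $h$ is the big spender at $t$ (so $\p^t(\x^t_i)-k\le c-1$) combined with divisibility by $k$; and a lower bound from the observation that at the last moment $i$ lost a chore, $i$ was the big spender while the still-undisturbed $h$ was not, forcing $\p(\x_i)\ge k+(c-1)$ just before that loss. Together these pin all of $N_{<r}$ to within one step of $k$ of each other at $t$; the property then propagates to $T$ because (as you correctly argue from Lemma~\ref{lem:algobs} and the definition of $T$) agents of $N_{<r}$ only lose chores before $T$, and only the current big spender can lose one, so the spread never exceeds $k$. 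If you replace your exact-alignment invariant with this two-sided estimate, your argument closes; as written, the central invariant is missing and cannot be established in the form you state it.
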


\begin{restatable}{lemma}{belowHr}\label{lem:belowHr}
At $T$, agents in $N_{>r}$ are pEF1 towards others.
\end{restatable}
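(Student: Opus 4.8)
The plan is to prove Lemma~\ref{lem:belowHr} by a level-based structural argument about the groups $N_{>r}$ that have not yet undergone price-rise at time $T$. Recall that $T$ is the \emph{first} time-step at which the least spender enters a raised group. By Lemma~\ref{lem:priortot}(i), the raised groups at $T$ are exactly $N_1,\dots,N_r$, and all of $N_{>r}$ remain unraised, so by Lemma~\ref{lem:initcompprop}(ii) every agent in $N_{>r}$ still has mBB ratio $1$. The key observation I would exploit is that at every step strictly before $T$ the least spender lay in an unraised group; therefore, by Lemma~\ref{lem:algobs}, only least-spender agents ever received chores prior to $T$, and all such recipients were agents in unraised groups (including possibly agents of $N_{>r}$).

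\textbf{Core argument.} I would first establish what the spending profile of the agents in $N_{>r}$ looks like at $T$. Prices of chores owned by agents in $N_{>r}$ have never risen, so every such chore still costs $1$; hence each agent $i\in N_{>r}$ has integer spending $\p(\x_i)$ equal to her number of chores. I would argue that within each unraised group the pEF1 property is preserved throughout the run, which is exactly Lemma~\ref{lem:compsarepEF1}; this immediately handles pEF1 of an agent in $N_{>r}$ towards others \emph{in her own group}. The content of the lemma is therefore the cross-group claim: an agent $i\in N_{>r}$ is pEF1 towards agents in $N_{<r}\cup N_r$ and towards agents in other groups of $N_{>r}$. For targets in a raised group $N_{r'}$ with $r'\le r$, the prices there were multiplied by $k\ge 1$, so $\p(\x_h)$ only increased relative to its initial value; combined with Lemma~\ref{lem:spendingofLS} (the least spender's spending never decreases) and the fact that $i$'s own spending has not grown unless $i$ was a former least spender, I would show the spending of any raised-group agent $h$ is at least that of $i$ up to one chore. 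The delicate case is pEF1 of $i\in N_{>r}$ towards a \emph{lower} unraised group: here I would use property (iv) of Lemma~\ref{lem:initcompprop} (the big-spender spending $f(\cdot)$ weakly decreases with the group index) together with the invariant that before $T$ chores only flow downward from big to least spenders, so no agent in a lower unraised group has spending so small as to be pEF1-envied by $i$.

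\textbf{Main obstacle.} The hard part will be controlling the agents of $N_{>r}$ who \emph{received} chores before $T$ while acting as least spenders: such an agent $i$ may have higher spending than her original group's big spender, so the clean ``every agent spends at most the big-spender amount $f(r')$'' picture from the initial allocation no longer holds verbatim. To handle this I would invoke Lemma~\ref{lem:BSincompofformerLS}: if any former least spender sits in the big spender's group the allocation is already pEF1, which lets me assume the big spender never shares a group with a former least spender before $T$, so recipients in $N_{>r}$ were never themselves big spenders and hence never pEF1-envied others by more than one chore. Tying this together with the monotonicity of the least spender's spending (Lemma~\ref{lem:spendingofLS}) should close the cross-group cases. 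The symmetric upper-group claim is Lemma~\ref{lem:aboveHr}, whose proof I would mirror; the only genuinely new work for Lemma~\ref{lem:belowHr} is ensuring that the unraised groups, whose chores are all still priced $1$, cannot generate pEF1-envy toward one another, which reduces to a counting argument on chore numbers using properties (iv) and (vi) of Lemma~\ref{lem:initcompprop}.
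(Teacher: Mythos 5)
Your proposal has a genuine gap at its core case. You claim that for a target $h$ in a raised group, ``the prices there were multiplied by $k\ge 1$, so $\p(\x_h)$ only increased relative to its initial value.'' This is false: the whole purpose of a price-rise is to enable chores to be transferred \emph{away} from agents of that group, so by time $T$ a raised-group agent may have lost many chores and can spend far less than she did initially --- indeed, by the very definition of $T$, the least spender at $T$ sits in a raised group. That is precisely the delicate case, and your sketch gives no quantitative control over it. The paper's proof instead reduces the entire lemma to showing that an arbitrary $i\in N_{>r}$ does not pEF1-envy the least spender $\ell$ at $T$ (this suffices because $\ell$ has minimum spending, a reduction your per-target-class case analysis never makes), and then splits on whether $i$ ever gained a chore. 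If $i$ never gained one, her bundle is untouched since the initial allocation, so her spending up to one chore is at most $c-1$, where $c=\p^t(\x^t_h)$ is the spending of the big spender $h$ at the last price-rise step $t$; the crux is then to prove $\p^T(\x^T_\ell)\ge c-1$, which needs a further sub-argument: if $\ell$ lost a chore after $t$ she was the BS at that moment, and monotonicity of the BS's spending (there are no price-rises after $t$) forces the allocation at $T$ to already be pEF1; otherwise her spending is unchanged from $t$, where it was at least $c-1$. None of this counting appears in your sketch, and properties (iv)/(vi) of Lemma~\ref{lem:initcompprop} cannot substitute for it, since they describe the initial allocation rather than the allocation at $T$.

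Your handling of the agents in $N_{>r}$ that received chores is also a non-sequitur. From Lemma~\ref{lem:BSincompofformerLS} you infer that such recipients ``were never themselves big spenders and hence never pEF1-envied others.'' Not being the big spender does not imply not pEF1-envying anyone: the BS is merely the agent maximizing spending up to one chore, and Lemma~\ref{lem:BSenviesLS} only guarantees that \emph{the BS} envies the LS when the allocation fails pEF1; it says nothing about which other agents envy whom. The correct (and much simpler) argument, which is the paper's first case, is direct: if $j$ is the last chore $i$ gained, she was the LS at that moment $t'$, so $\p^T(\x^T_i\setminus j)=\p^{t'}(\x^{t'}_i)\le\p^T(\x^T_\ell)$ by Lemma~\ref{lem:spendingofLS}, hence $i$ is pEF1 toward $\ell$ and therefore toward everyone.
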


The above two lemmas imply that if the BS is not in $N_r$, then the allocation is pEF1. Let us assume that the allocation is not pEF1 at $T$. Let $b$, the BS at $T$ be in $N_r$, and $\ell$, the LS at $T$ be in $N_{<r}$, since the LS is in a raised group at $T$.

Suppose $\ell$ has never lost a chore. Let $\ell \in N_{r'}$, where $r'<r$, and $t'$ be the time when $N_{r'}$ underwent price-rise. Let $b'$ the BS at $t'$. Since the spending of the BS (up to removal of one chore) just after price-rises does not increase,
we have:
\[\p^T(\x^T_b\setminus j)\le \p^{t'}(\x^{t'}_{b'}\setminus j') \le \p^{t'}(\x^{t'}_\ell) = \p^T(\x^T_\ell), \]
for some chores $j\in \x^T_b, j'\in \x^{t'}_{b'}$. The intermediate transition follows from the property that $N_{r'}$ is pEF1. This shows that the allocation is pEF1.

On the other hand, suppose $\ell$ has lost at least one chore $j$ prior to $T$. At $T$, $j$ must be assigned to some unraised agent $i$ (Lemma~\ref{lem:priortot}). Further, there is a chore $j'\in \x^T_b$ s.t. $j'\in \mBB_i$. Thus, $b \xrightarrow{\x} j' \xrightarrow{\text{mBB}} i \xrightarrow{\x} j \xrightarrow{\text{mBB}} \ell$ is an alternating path. The algorithm now transfers chores along this alternating path. 

Note that as long as $\ell$ does not own a chore that she initially owned, such a path is available, and such a transfer is possible. If not, then it is as if $\ell$ has never lost a chore, and in that case the previous argument shows that the allocation must be pEF1.

Once again, doing such transfers may change the identities of the BS and LS, and we must continue the algorithm. We show that:
\begin{restatable}{lemma}{afterTinvariant}\label{lem:afterTinvariant}
After $T$, the following are invariant:
\begin{enumerate}
\item[(i)] Agents in $N_{<r}$ do not pEF1-envy any other agent.
\item[(ii)] Agents in $N_{>r}$ do not pEF1-envy any other agent.
\item[(iii)] Each group is pEF1.
\end{enumerate}
\end{restatable}

Just after $T$, the BS is in $N_r$ and the LS is in $N_{<r}$. After a chore transfer, the identity of the LS or BS can change. If the BS enters either $N_{<r}$ or $N_{>r}$, then using Lemma~\ref{lem:afterTinvariant} the allocation would be pEF1. While the BS is in $N_r$: (i) if the LS is in $N_r$, the allocation would be pEF1, (ii) if the LS is in $N_{>r}$, then we can transfer from BS to LS directly along an mBB edge (which exists due to Lemma~\ref{lem:priortot}), (iii) if the LS is in $N_{<r}$, then we can transfer from the BS to LS via an alternating path with three agents as described above.

Finally we argue termination in polynomial-time:

\begin{lemma}\label{lem:afterTtermination}
Algorithm~\ref{alg:ef1po} performs at most $\poly{n,m}$ steps after $T$ and terminates with a pEF1 allocation.
\end{lemma}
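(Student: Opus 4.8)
The plan is to prove Lemma~\ref{lem:afterTtermination} via a potential-function argument, using as the key resource the number of chores that the least spender has \emph{not yet recovered} from among those she initially owned. Concretely, for the current least spender $\ell$, let $\phi(\ell)$ count the chores in $\x^0_\ell$ that $\ell$ does not currently own. The discussion preceding the lemma establishes that while the allocation is not pEF1 and the big spender $b$ is in $N_r$, exactly one of two transfers occurs: a direct transfer from $b$ to a least spender in $N_{>r}$ along an mBB edge, or a three-agent transfer that returns to $\ell \in N_{<r}$ one of her originally-owned chores $j \in \x^0_\ell$. The crucial observation for the three-agent case is that this transfer strictly decreases $\phi(\ell)$ for that particular agent $\ell$, since $\ell$ reclaims a chore she initially owned.

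First I would argue that the LS's spending is nondecreasing (Lemma~\ref{lem:spendingofLS}) and that, after $T$, no price-rises occur, so all prices are fixed; hence spendings take values in a fixed finite set and the LS's spending can only increase a bounded number of times. I would then bound the number of three-agent transfers: each such transfer hands a fixed LS $\ell$ back an originally-owned chore, so between consecutive increases of the LS's spending level, agent $\ell$ can be the recipient of at most $|\x^0_\ell| \le m$ such transfers before either she is no longer the LS or she has recovered enough to be pEF1 (as the argument before the lemma shows: once $\ell$ owns all her initially-owned chores again, the ``$\ell$ never lost a chore'' analysis applies and the allocation is pEF1). Summing over the at most $n$ agents who can serve as LS and the at most $\poly{n,m}$ many distinct spending levels gives a $\poly{n,m}$ bound on three-agent transfers.

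Next I would bound the direct transfers (BS in $N_r$, LS in $N_{>r}$, or more generally direct mBB transfers from $b$ to $\ell$). For these I would use the standard market-dynamics argument: a direct transfer moves a chore from the big spender to the least spender, which weakly decreases $\max_i \min_{j \in \x_i}\p(\x_i \setminus j)$ and weakly increases $\min_i \p(\x_i)$, and invoke Lemma~\ref{lem:afterTinvariant} to ensure the three invariants persist so the BS stays in $N_r$ (or the allocation is already pEF1). Combined with the monotone LS spending, a sequence of direct transfers without an intervening level change is bounded by the number of chores $m$ times the number of agents, again $\poly{n,m}$. Interleaving the two transfer types, each ``phase'' between successive LS-spending increases contains $\poly{n,m}$ transfers, and there are $\poly{n,m}$ phases.

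The main obstacle I anticipate is ruling out cycling: showing that a single three-agent transfer cannot inadvertently undo progress by causing the LS to re-lose a previously recovered chore, and that the potential $\phi$ together with the LS-spending level is genuinely monotone across \emph{both} transfer types simultaneously. Handling this requires carefully invoking Lemma~\ref{lem:afterTinvariant}(i)--(iii) to confine the BS to $N_r$ and the LS to $N_{<r}\cup N_{>r}$, and checking that the chore $\ell$ reclaims is never subsequently taken from her while she remains the LS (since after $T$ the LS's spending never decreases and chores leave only big spenders). Once monotonicity of the lexicographic potential $(\text{LS spending level}, -\phi)$ is established, termination with a pEF1 allocation follows because the loop exits precisely when $(\x,\p)$ becomes pEF1, and Lemma~\ref{lem:pEF1impliesEF1} upgrades this to EF1+fPO.
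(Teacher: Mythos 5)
Your proof is correct, but it is a genuinely different (and heavier) argument than the one in the paper. The paper's proof uses a single, very simple potential: every transfer after $T$ --- whether a direct transfer from $b\in N_r$ to $\ell\in N_{>r}$ or a three-agent transfer $b \rightarrow j' \rightarrow i \rightarrow j \rightarrow \ell$ with $\ell \in N_{<r}$ --- removes exactly one chore from the agents of $N_r$, and agents of $N_r$ never receive a chore after $T$ (while the allocation is not pEF1 the BS is confined to $N_r$ by Lemma~\ref{lem:afterTinvariant}, so every recipient lies in $N_{<r}\cup N_{>r}$). Hence at most $m$ transfers occur after $T$; combined with the observation that the spending gap weakly decreases and that a transfer is always available while the allocation is not pEF1, this finishes the proof. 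Your argument instead uses the LS's spending level as the outer progress measure (nondecreasing by Lemma~\ref{lem:spendingofLS}, and taking at most $O(m^2)$ distinct values since prices are frozen after $T$ and lie in $\{1,k\}$), and counts transfers within each phase of constant LS spending; this forces you to do the extra bookkeeping you identify as the ``main obstacle'' --- that recipients' spendings jump strictly above the phase minimum and can never fall back, since a recipient who later lost a chore would have to be the BS (impossible outside $N_r$ without the allocation already being pEF1) or a spending-neutral middle agent. That bookkeeping does go through, so your proof is valid and yields a $\poly{n,m}$ bound, though looser (roughly $O(nm^2)$ transfers versus the paper's $\le m$). One small caveat: the lexicographic potential $(\text{LS spending level}, -\phi)$ is not literally well-defined or monotone across changes in the identity of the LS, since $\phi$ is attached to a particular agent; what actually carries your argument is the per-phase counting (each transfer permanently removes one agent from the set of agents at the phase-minimum spending), so you should phrase the potential that way rather than lexicographically.
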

\begin{proof}
Call the difference between the spending (up to the removal of the biggest chore) of the big spender and the spending of the least spender as the \textit{spending gap}. If the allocation is not pEF1, the spending gap is positive. After $T$, there are no price-rises, hence the spending gap weakly decreases. As long as the allocation is not pEF1, the BS must be in $N_r$. Based on whether the LS is in $N_{<r}$ or $N_{>r}$ we can perform chore transfers which weakly decrease the spending gap. While the allocation is not pEF1, such a transfer is always possible. Further, each such transfer reduces the number of chores owned by agents in $N_r$, and such agents do not receive any chores again. Hence there can only be $\poly{n,m}$ steps after $T$ eventually terminating in a pEF1 allocation.
\end{proof}

\subsection{Summarizing our EF1+fPO algorithm}
We summarize Algorithm~\ref{alg:ef1po}.

\begin{enumerate}
\item Algorithm~\ref{alg:ef1po} first calls Algorithm~\ref{alg:MakeInitGroups} to partition agents into groups $N_1,\dots,N_R$ with properties as in Lemma~\ref{lem:initcompprop} to obtain an initial allocation. Lemma~\ref{lem:inittermination} shows this takes $\poly{n,m}$ steps.
\item When the current allocation $(\x,\p)$ is not pEF1, the BS $b$ pEF1-envies the LS $\ell$. While there is an mBB edge from $\ell$ to a chore owned by $b$, we transfer a chore directly from $b$ to $\ell$. If not, in order to transfer along mBB edges, we may have to raise the prices of chores belonging to the group of $b$, creating \textit{raised} groups.
\item Let $T$ be the first time-step when the LS enters a raised group. Prior to $T$, while the allocation is not pEF1, the algorithm either performs a direct chore transfer from the BS to the LS, or performs price-rise on the group of $b$. Lemma~\ref{lem:priortot} shows that the groups are raised exactly once and in order of $N_1,\dots,N_R$. Lemma~\ref{lem:priortoTtermination} shows the algorithm runs for $\poly{n,m}$ steps before $T$.
\item Once the LS enters a raised group at $T$, there are no more price-rise steps. The algorithm then performs chore transfers from the BS to LS via alternating paths with at most 3 agents. Lemma~\ref{lem:afterTinvariant} and Lemma~\ref{lem:afterTtermination} show that the algorithm performs at most $\poly{n,m}$ steps after $T$ and terminates with a pEF1 allocation.
\item Finally, we note that allocation is always fPO, since (i) Algorithm~\ref{alg:MakeInitGroups} returns a market outcome which is fPO, and (ii) any transfer of chores happens along mBB edges. 
\end{enumerate}

\subsection{Examples}
We illustrate two examples of an execution of Algorithm~\ref{alg:ef1po} for a fair division instance, one in which the algorithm terminates before a time $T$ where the LS enters a raised component, and one in which the algorithm terminates after such a time $T$.

\subsubsection{Algorithm terminates before $\mathbf{T}$}
Consider the instance captured by the table of disutilities below.
\begin{center}
\begin{tabular}{|c|c|c|c|c|c|c|c|c|c|c|c|c|c|c|}
\hline
\multicolumn{2}{|c|}{} & \multicolumn{13}{|c|}{Chores} \\
\cline{3-15}
 \multicolumn{2}{|c|}{} & $j_1$ & $j_2$ & $j_3$ & $j_4$ & $j_5$ & $j_6$ & $j_7$ & $j_8$ & $j_9$ & $j_{10}$ & $j_{11}$ & $j_{12}$ & $j_{13}$ \\ 
\hline
\multirow{6}{3em}{Agents} & $a_1$ & 1 & 1 & 1 & 1 & 1 & $k$ & $k$ & $k$ & $k$ & $k$ & $k$ & $k$ & $k$ \\
\cline{2-15}
 & $a_2$ & $k$ & $k$ & $k$ & $k$ & $k$ & 1 & 1 & 1 & 1 & $k$ & $k$ & $k$ & $k$ \\
\cline{2-15}
 & $a_3$ & $k$ & $k$ & $k$ & $k$ & $k$ & $k$ & $k$ & $k$ & $k$ & 1 & $k$ & $k$ & $k$ \\
\cline{2-15}
 & $a_4$ & $k$ & $k$ & $k$ & $k$ & $k$ & $k$ & $k$ & $k$ & $k$ & $k$ & 1 & $k$ & $k$ \\
\cline{2-15}
 & $a_5$ & $k$ & $k$ & $k$ & $k$ & $k$ & $k$ & $k$ & $k$ & $k$ & $k$ & $k$ & 1 & $k$ \\
\cline{2-15}
 & $a_6$ & $k$ & $k$ & $k$ & $k$ & $k$ & $k$ & $k$ & $k$ & $k$ & $k$ & $k$ & $k$ & 1 \\
\hline
\end{tabular}
\end{center}
\hspace{1cm}
Algorithm~\ref{alg:ef1po} begins by calling Algorithm~\ref{alg:MakeInitGroups} which returns the following initial allocation $\x$.
\begin{multicols}{2}
\begin{itemize}
    \item $\x_{a_1} = \{j_1, j_2, j_3, j_4, j_5\}$
	\item $\x_{a_2} = \{j_6, j_7, j_8, j_9\}$
	\item $\x_{a_3} = \{j_{10}\}$
	\item $\x_{a_4} = \{j_{11}\}$
	\item $\x_{a_5} = \{j_{12}\}$
	\item $\x_{a_6} = \{j_{13}\}$
\end{itemize}
\end{multicols}

Suppose $k = 5$. Note that here each agent lies in its own group. Since $\x$ is not pEF1, Algorithm~\ref{alg:ef1po} proceeds to raises the prices of the chores of $a_1$ by a factor of $k$. Then, a chore is transferred from $a_1$ to each of $a_3$, $a_4$, $a_5$, and $a_6$. This gives us the following pEF1 (and thus EF1) allocation $\x'$.
\begin{multicols}{2}
\begin{itemize}
    \item $\x'_{a_1} = \{j_5\}$
	\item $\x'_{a_2} = \{j_6, j_7, j_8, j_9\}$
	\item $\x'_{a_3} = \{j_1, j_{10}\}$
	\item $\x'_{a_4} = \{j_2, j_{11}\}$
	\item $\x'_{a_5} = \{j_3, j_{12}\}$
	\item $\x'_{a_6} = \{j_4, j_{13}\}$
\end{itemize}
\end{multicols}

\subsubsection{Algorithm terminates after $\mathbf{T}$}
We now consider a slight alteration of the previous instance, with an additional agent $a_7$ and additional chore $j_{14}$. The table of disutilities is given below.

\begin{center}
\begin{tabular}{|c|c|c|c|c|c|c|c|c|c|c|c|c|c|c|c|}
\hline
\multicolumn{2}{|c|}{} & \multicolumn{14}{|c|}{Chores} \\
\cline{3-16}
 \multicolumn{2}{|c|}{} & $j_1$ & $j_2$ & $j_3$ & $j_4$ & $j_5$ & $j_6$ & $j_7$ & $j_8$ & $j_9$ & $j_{10}$ & $j_{11}$ & $j_{12}$ & $j_{13}$ & $j_{14}$ \\ 
\hline
\multirow{7}{3em}{Agents} & $a_1$ & 1 & 1 & 1 & 1 & 1 & $k$ & $k$ & $k$ & $k$ & $k$ & $k$ & $k$ & $k$ & $k$ \\
\cline{2-16}
 & $a_2$ & $k$ & $k$ & $k$ & $k$ & $k$ & 1 & 1 & 1 & 1 & $k$ & $k$ & $k$ & $k$ & $k$ \\
\cline{2-16}
 & $a_3$ & $k$ & $k$ & $k$ & $k$ & $k$ & $k$ & $k$ & $k$ & $k$ & 1 & $k$ & $k$ & $k$ & $k$ \\
\cline{2-16}
 & $a_4$ & $k$ & $k$ & $k$ & $k$ & $k$ & $k$ & $k$ & $k$ & $k$ & $k$ & 1 & $k$ & $k$ & $k$ \\
\cline{2-16}
 & $a_5$ & $k$ & $k$ & $k$ & $k$ & $k$ & $k$ & $k$ & $k$ & $k$ & $k$ & $k$ & 1 & $k$ & $k$ \\
\cline{2-16}
 & $a_6$ & $k$ & $k$ & $k$ & $k$ & $k$ & $k$ & $k$ & $k$ & $k$ & $k$ & $k$ & $k$ & 1 & $k$ \\
\cline{2-16}
 & $a_7$ & $k$ & $k$ & $k$ & $k$ & $k$ & $k$ & $k$ & $k$ & $k$ & $k$ & $k$ & $k$ & $k$ & 1 \\
\hline
\end{tabular}
\end{center}
\hspace{1cm}

Again, Algorithm~\ref{alg:ef1po} first calls Algorithm~\ref{alg:MakeInitGroups} which returns the following initial allocation $\y$.

\begin{multicols}{2}
\begin{itemize}
    \item $\y_{a_1} = \{j_1, j_2, j_3, j_4, j_5\}$
	\item $\y_{a_2} = \{j_6, j_7, j_8, j_9\}$
	\item $\y_{a_3} = \{j_{10}\}$
	\item $\y_{a_4} = \{j_{11}\}$
	\item $\y_{a_5} = \{j_{12}\}$
	\item $\y_{a_6} = \{j_{13}\}$
	\item $\y_{a_7} = \{j_{14}\}$
	\item[~] ~
\end{itemize}
\end{multicols}

Again let $k=5$. As before, Algorithm~\ref{alg:ef1po} raises the prices of the chores of $a_1$ by a factor of $k$ and transfers a chore from $a_1$ to each of $a_3$, $a_4$, $a_5$, and $a_6$. Now, however, the allocation is not pEF1. The BS, $a_2$, still pEF1-envies the LS $a_7$. Thus, the prices of $a_2$'s chores are raised, and a chore is transferred from $a_2$ to $a_7$. We now have the following allocation $\y'$. 

\begin{multicols}{2}
\begin{itemize}
    \item $\y'_{a_1} = \{j_5\}$
	\item $\y'_{a_2} = \{j_7, j_8, j_9\}$
	\item $\y'_{a_3} = \{j_1, j_{10}\}$
	\item $\y'_{a_4} = \{j_2, j_{11}\}$
	\item $\y'_{a_5} = \{j_3, j_{12}\}$
	\item $\y'_{a_6} = \{j_4, j_{13}\}$
	\item $\y'_{a_7} = \{j_6, j_{14}\}$
	\item[~] ~
\end{itemize}
\end{multicols}

We see that $\y'$ is not pEF1 as $a_2$ pEF1-envies $a_1$, and in fact $a_1$ is the LS. This marks the first time $T$ in which the LS has entered a raised group. Then, to achieve pEF1, $j_1$ is returned to $a_1$ from $a_3$, and $j_7$ is transferred from $a_2$ to $a_3$ as a replacement. This gives the pEF1 allocation $\y''$.
\begin{multicols}{2}
\begin{itemize}
    \item $\y''_{a_1} = \{j_1, j_5\}$
	\item $\y''_{a_2} = \{j_8, j_9\}$
	\item $\y''_{a_3} = \{j_7, j_{10}\}$
	\item $\y''_{a_4} = \{j_2, j_{11}\}$
	\item $\y''_{a_5} = \{j_3, j_{12}\}$
	\item $\y''_{a_6} = \{j_4, j_{13}\}$
	\item $\y''_{a_7} = \{j_6, j_{14}\}$
	\item[~] ~
\end{itemize}
\end{multicols}

\section{EF+PO allocation of divisible chores}\label{sec:divisible}
In this section, we prove our second result:
\begin{restatable}{theorem}{divisible}\label{thm:divisible}
Given a bivalued fair division instance $(N,M,V)$ of divisible chores with all $c_{ij} \in \{a,b\}$ for some $a,b\in \Rp$, an EF+fPO allocation can be computed in strongly polynomial-time.
\end{restatable}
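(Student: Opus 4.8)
The plan is to reduce the divisible EF+fPO problem for bivalued chores to a competitive equilibrium computation that mirrors, in the fractional setting, the machinery already developed for the indivisible case. Recall that a competitive equilibrium with equal incomes (CEEI) yields an EF+fPO fractional allocation, since the First Welfare Theorem gives fPO and equal budgets give EF (each agent's bundle is cost-minimal subject to a common budget constraint, so no agent can strictly prefer another's equally-priced bundle). Thus it suffices to compute a market equilibrium $(\x,\p)$ on mBB in which every agent has \emph{equal spending} $\p(\x_i) = \p(\x_h)$ for all $i,h \in N$. The divisibility of chores is what makes exact equal spending attainable, in contrast to the indivisible case where we could only hope for pEF1.

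First I would set up the same initial structure as Algorithm~\ref{alg:MakeInitGroups}: price every low-cost chore at $1$ and every high-cost chore at $k$, start from a cost-minimizing integral assignment (which is on mBB with every $\alpha_i = 1$), and compute the mBB graph together with the group decomposition $N_1,\dots,N_R$. The key structural facts from Lemma~\ref{lem:initcompprop} carry over: high-cost chores sit in the last group, and there is no mBB edge from an agent in a later group to a chore in an earlier group unless prices are raised. The algorithm then proceeds by repeatedly identifying the least spender $\ell$ and the most-spending agent $b$, and equalizing spending by transferring \emph{fractional} amounts of chores along mBB edges (or, where no mBB edge exists, by raising the prices of an entire group by a factor of $k$ to create the needed edges, exactly as before). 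Because transfers are continuous, at each step I can move precisely enough chore mass to bring two spendings together rather than overshooting, which is the main advantage over the indivisible regime.

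The key steps, in order, are: (1) show the invariant that the allocation stays on mBB and fPO throughout (immediate, since both fractional transfers and price-rises respect mBB); (2) show that the spending of the least spender weakly increases and the spending of the big spender weakly decreases, so the \emph{spending gap} $\max_i \p(\x_i) - \min_i \p(\x_i)$ is monotonically non-increasing; (3) show that each price-rise happens at most once per group and in the order $N_1,\dots,N_R$, bounding the number of price-rise phases by $R \le n$; and (4) show that within a phase (no price change), a polynomial number of fractional transfers drives the spending gap to zero, at which point all spendings are equal and the allocation is CEEI, hence EF+fPO. The strongly-polynomial bound follows because each phase can be made to saturate at least one mBB edge or empty one agent's bundle of transferable mass, giving $\poly{n,m}$ transfers per phase.

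The hard part will be establishing termination in \emph{strongly} polynomial time, i.e.\ step (4). With fractional transfers there is a genuine risk of cycling or of requiring infinitely many infinitesimal adjustments as the set of tight mBB edges changes. I expect to handle this by arguing that between consecutive price-rises the mBB graph is fixed, so equalizing spending reduces to solving a flow/transshipment problem on this fixed bipartite graph: one wants a feasible reallocation of chore mass along mBB edges that equalizes agent spendings, and this is a linear-feasibility problem solvable in strongly polynomial time (and whose combinatorial structure guarantees a vertex solution reached after polynomially many edge-saturations). The subtlety is verifying that such an equalizing flow always exists once all necessary groups have been raised --- this should follow from the bivalued structure, since after raising groups $N_1,\dots,N_r$ the high-cost chores in $N_R$ become universally mBB and provide enough connectivity for the spendings of all agents to be pushed to a common value. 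Making this connectivity-and-feasibility argument precise, and ruling out degeneracies where the least spender is ``stuck'' below every other agent with no augmenting mBB path, is where the bulk of the work will lie.
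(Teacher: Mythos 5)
Your high-level framing -- compute a market outcome on mBB in which every agent spends the same (a ``pEF'' outcome), which then yields EF+fPO exactly as in Lemma~\ref{lem:pEF1impliesEF1} -- is the same as the paper's. The gap is in the step you yourself flag as the bulk of the work, and the feasibility claim you lean on there is false. After raising groups $N_1,\dots,N_r$, the high-cost chores do \emph{not} ``become universally mBB'': a raised agent's mBB ratio drops to $1/k$, while a high-cost chore has price $k$ and cost $k$, i.e.\ ratio $1$, so high-cost chores are mBB only for \emph{unraised} agents. In fact a raised agent can receive, along mBB edges, essentially only chores her own group originally owned (cost $1$, price $k$). This matters because your adaptive, one-group-at-a-time raising scheme (mirroring Algorithm~\ref{alg:ef1po} before time $T$) can provably reach the stuck state you worry about. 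Concretely, take three singleton groups with initial spendings $17$, $10$, $1$, all cross-costs equal to $k=1.1$. Raising $N_1$ puts $a_1$ at $18.7$; draining to $a_3$ brings $a_1$ down to $10$ while $a_3$ is only at $9.7$, so $N_2$ must then be raised to $11$; draining $a_2$ brings $a_3$ up to level $10$ while $a_2$ is still above it. Now the least spenders include the \emph{raised} agent $a_1$, the pEF target is the average $30.7/3\approx 10.23$, so $a_1$ must gain spending -- but there is no mBB edge into $a_1$ at all except through chores $a_1$ originally owned. So ``an equalizing flow on the fixed mBB graph'' simply does not exist for every reachable set of raised groups; its existence is precisely what has to be engineered, not assumed, and your transshipment reduction does not supply it. (A smaller issue: your monotone spending-gap invariant also fails across price-rise steps, since a raise multiplies the raised group's spending by $k$.)

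The paper closes this gap by never raising groups adaptively during the transfer phase. For a guessed $r$ it raises the first $r$ groups \emph{simultaneously, before any transfers}, then drains uniformly from a pool $B$ of biggest spenders (raised groups) into a pool $L$ of least spenders (unraised groups), and isolates two failure modes: a raised group entering $L$ (raised groups spend too much) or an unraised group entering $B$ (raised groups spend too little). A short induction over $r$ shows some $r^*\in\{1,\dots,R-1\}$ avoids both, so trying $r=1,\dots,R-1$ finds a pEF outcome after $O(n^4)$ transfers -- no LP or flow solver, and no take-back mechanism, is needed. Two further ingredients you would also need to import: the divisible-case initialization (Algorithm~\ref{alg:MakeInitGroupsDiv}) uses a balanced-flow allocation so that each initial group is exactly pEF, not merely pEF1, which is what makes per-agent group spending $s(N_i)$ well defined for the draining argument. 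Alternatively, if you insist on adaptive raising, you must port the indivisible algorithm's post-$T$ machinery -- three-agent alternating paths letting a raised least spender take back (fractions of) chores she originally owned -- and prove analogues of Lemmas~\ref{lem:aboveHr}--\ref{lem:afterTinvariant}; either way, a substantive argument is missing from the proposal as written.
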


We prove Theorem~\ref{thm:divisible} by showing that Algorithm~\ref{alg:efpo} computes an EF+fPO allocation in strongly polynomial-time. First note that we can scale the costs of the chores so that they are in $\{1, k\}$.

\subsection{Describing the Approach}\label{sec:divisible-approach}
Since we are interested in an EF+fPO allocation, we try to construct a (fractional) market outcome $(\x,\p)$ which is price-envy-free (pEF) as opposed to pEF1. A market outcome $(\x,\p)$ is said to be pEF if for all agents $i,h\in N$, $\p(\x_i)\ge \p(\x_h)$, i.e., all agents have equal spending. Similar to Lemma~\ref{lem:pEF1impliesEF1}, we can show that if $(\x,\p)$ is a pEF market outcome, then $\x$ is EF. Thus, our aim is to obtain a market outcome in which all agents have the same spending. 

As in Algorithm~\ref{alg:ef1po}, we begin with an initial market outcome $(\x,\p)$ on mBB, with certain desirable properties. Specifically, we partition the agents into groups $N_1,\dots,N_R$ so that each group spends the same amount, each group is a partial component of some agent, and the \textit{per-agent spending} of the group $N_r$ decreases with $r$. Now, if the allocation is not pEF, it must be the case that any agent in the \textit{pool} of biggest spenders $B$ (here $B = \{b : b\in \s{argmax}_{i\in N} \p(\x_i)\}$) envies any agent in the pool of least spenders $L$ (here $L = \{ \ell : \ell\in \s{argmin}_{i\in N} \p(\x_i) \}$), i.e., $\p(\x_b) > \p(\x_\ell)$ for any $b\in B$ and $\ell \in L$. The natural idea following Algorithm~\ref{alg:ef1po} is to reduce the pEF-envy by \textit{draining} chores uniformly from $B$ to $L$ along mBB edges. Further, if the pool $B$ has chores which are not on mBB for $L$, then we raise the prices of certain chores in $B$ by a factor of $k$, which creates the required mBB edges following Lemma~\ref{lem:priortot}. Initially, $B$ equals $N_1$, and $L$ equals $N_R$. At a certain point in the run of the algorithm, $B$ equals $\bigcup_{i\le r} N_i$ and $L$ equals $\bigcup_{i\ge r'} N_{r'}$, where $r\le r'$. When eventually the pools of biggest and least spenders coincide, the algorithm terminates with a pEF allocation. This approach is similar to Algorithm~\ref{alg:ef1po} and terminates in polynomial time.

However, we observe a stronger property of this approach, which we use to design Algorithm~\ref{alg:efpo}. We claim that effectively, this approach is equivalent to raising the prices of all chores in the first $r^*$ groups, and then draining chores carefully, eventually resulting in a pEF allocation. Therefore, we can simply iterate over $r=1$ to $R-1 \le n$ as a guess for a right value of $r^*$. We now describe the parts of the algorithm and also prove the above argument.

\subsection{Obtaining Initial Groups}\label{sec:divisible-initcomps}
\begin{algorithm}[t]
\caption{$\s{MakeInitGroupsDiv}$}\label{alg:MakeInitGroupsDiv}
\textbf{Input:} Fair division instance $(N,M,C)$ with $c_{ij}\in \{1,k\}$\\
\textbf{Output:} Fractional alloc. $\x$, prices $\p$, agent groups $\{N_r\}_{r\in [R]}$
\begin{algorithmic}[1]
\State $(\x,\p) \gets$ initial cost minimizing balanced flow fractional market allocation, where $p_j = c_{ij}$ for $j \in \x_i$.
\State $R\gets 1$, $N' \gets N$
\While{$N' \neq \emptyset$}
\State $b \gets \s{argmax}_{i \in N'} \p(\x_i)$ \Comment{Biggest Spender}
\State $C_b \gets$ Component of $b$ \Comment{See Definition~\ref{def:component}}
\State $H_R \gets C_b \cap (N' \cup \x_{N'})$ \Comment{Partial component}
\State $N_R \gets H_R \cap N$ \Comment{Agent group}
\State $N' \gets N' \setminus N_R$, $R\gets R+1$
\EndWhile
\State \Return $(\x,\p, \{N_r\}_{r\in [R]})$
\end{algorithmic}
\end{algorithm}

\begin{algorithm}[t!]
\caption{Computing an EF+fPO allocation}\label{alg:efpo}
\textbf{Input:} Fair division instance $(N,M,C)$ with $c_{ij}\in \{1,k\}$\\
\textbf{Output:} A fractional allocation $\x$
\begin{algorithmic}[1]
\For {$r \in [R-1]$}
	\State $(\x,\p, \{N_r\}_{r\in [R]}) \gets \s{MakeInitGroupsDiv(N,M,V)}$
	\State Raise prices of chores of agents in $\{N_i\}_{i \in [r]}$ by a factor of $k$
	\State $B \gets N_1$ \Comment{Pool of Biggest Spenders}
	\State $\beta \gets 1$ \Comment{Index of lowest group in $B$}
	\State $L \gets N_R$ \Comment{Pool of Least Spenders}
	\State $\lambda \gets R$ \Comment{Index of highest group of $L$}
	\While{$\beta \leq r$ \textbf{and} $\lambda \geq r + 1$}
		\State $d_B \gets |B| \cdot (s(B) - s(N_{\beta+1}))$ \State $d_L \gets |L| \cdot (s(N_{\lambda-1})) - s(L))$ 
		\For{agents $b \in B$}
			\If{$\beta = r$ \textbf{and} $\lambda = r + 1$}
				\State $q \gets \frac{|B| \cdot s(B) + |L| \cdot s(L)}{|N|}$
				\State $C \gets$ subset of $\x_{b}$ s.t. $\p(C) = s(B) - q$
			\ElsIf{$d_B \geq d_L$}
				\State $C \gets$ subset of $\x_{b}$ s.t. $\p(C) = \frac{d_L}{|B|}$
			\Else
				\State $C \gets$ subset of $\x_{b}$ s.t. $\p(C) = \frac{d_B}{|B|}$
			\EndIf
			\For{agents $\ell \in L$}
				\State $\x_{b} \gets \x_{b} \setminus \frac{1}{|L|}C$
				\State $\x_{\ell} \gets \x_{\ell} \cup \frac{1}{|L|}C$
			\EndFor
		\EndFor
		\If{$(\x, \p)$ is pEF}
			\State \Return $(\x, \p)$		
		\ElsIf{$d_B \geq d_L$}
			\State $L \gets L \cup N_{\lambda-1}$
			\State $\lambda \gets \lambda - 1$
		\Else
			\State $B \gets B \cup N_{\beta+1}$
			\State $\beta \gets \beta + 1$
			
		\EndIf	
	\EndWhile
\EndFor
\end{algorithmic}
\end{algorithm}

Algorithm~\ref{alg:MakeInitGroupsDiv} obtains a partition of the set $N$ of agents into groups $N_1, \ldots, N_R$ similar to Algorithm~\ref{alg:MakeInitGroups}, with the strengthened condition that the partial components are price envy-free (pEF) rather than pEF1. The price vector $\p$ is set the same way as in Algorithm~\ref{alg:MakeInitGroups}. Then all low-cost chores are priced at $1$ and high-cost chores are priced at $k$. To allocate these chores, however, we use a balanced flow formulation~\cite{devanur2008mkteqb}. This gives us an allocation with the key property that if agent $i$ spends more than agent $j$, then there is no alternating path from $i$ to $j$. Then, each $b$ selected in Line 4 of Algorithm~\ref{alg:MakeInitGroupsDiv} is a biggest spender in its partial component, but also cannot spend more than any agent in its partial component. Thus, it must be that the partial component (i.e., group) is pEF. Algorithm~\ref{alg:MakeInitGroupsDiv} runs in polynomial time as the balanced flow allocation is achieved by at most $m$ max-flow computations (see~\cite{devanur2008mkteqb}) and the remaining structure is analogous to Algorithm~\ref{alg:MakeInitGroups}.

\subsection{Discussion on Algorithm~\ref{alg:efpo}}\label{sec:divisible-discussion}
Algorithm~\ref{alg:efpo} proceeds similarly to Algorithm~\ref{alg:ef1po} with two key differences:
\begin{enumerate}
	\item[(1)] Multiple groups are raised simultaneously rather than one at a time.
	\item[(2)] Transfer are made between a \textit{pool of biggest spenders} and a \textit{pool of least spenders}. That is, we transfer (fractions of) chores from \textit{all} biggest spenders to \textit{all} least spenders. 
\end{enumerate}
We show that there always exists some $r^*$, $1 \leq r^* \leq R - 1$, such that simultaneously raising the first $r^*$ groups allows us to reach a pEF allocation by performing chore transfers along mBB edges from biggest spenders to least spenders. Recalling that the partial components associated with each group $N_i$ are pEF, let $s(N_i)$ denote the \textit{per agent spending} of $N_i$. By construction, we have that $s(N_i) \geq s(N_j)$ for $i \leq j$. Suppose we do an initial raise of the first $r$ groups. Let $\beta \in [n]$ be such that the pool of biggest spenders $B$ is the union of raised groups $\bigcup_{i=1}^\beta N_i$. Similarly, let $\lambda \in [n]$ be such that the pool of least spenders $L$ is the union of unraised groups $\bigcup_{i=\lambda}^R N_i$. Algorithm~\ref{alg:efpo} then proceeds to transfer chores out of agents in $B$ at a uniform rate $\rho_B$ and into agents in $L$ at a uniform rate $\rho_L$, such that $|B|\rho_B = |L|\rho_L$. Then, $s(B)$ will fall towards $s(N_{\beta+1})$ while $s(L)$ rises towards $s(N_{\lambda - 1})$. When we transfer enough chores so that $s(B)$ reaches $s(N_{\beta+1})$ or $s(L)$ reaches $s(N_{\lambda-1})$, whichever occurs first, we add the agents in that group to the respective pool. Note that in this process both $B$ and $L$ are only growing. Once an agent enters $B$ or $L$, she never leaves. In addition, we have that $s(B)$ is always decreasing while $s(L)$ is always increasing, i.e., the \textit{spending gap} $s(B) - s(L)$ is weakly decreasing. When the spending gap is zero, i.e. $s(B) = s(L)$, the allocation is pEF. However, to guarantee that transfers occur along mBB edges, we must have that all agents in $B$ are in raised groups, and all agents in $L$ are in unraised groups. Thus, we give the two conditions under which the algorithm \textit{cannot} obtain a pEF allocation after raising the first $r$ groups:
\begin{enumerate}
    \item[-] \textit{Condition 1.} Raised groups have \textit{too much} total spending. The spending of agents in $L$ rises to spending level of $N_{r}$ before the spending of agents in $B$ falls to the level of $N_{r}$. That is, a raised group enters $L$.
    \item[-] \textit{Condition 2.} Raised groups have \textit{too little} total spending. The spending of agents in $B$ falls to the level of $N_{r+1}$ before the spending of agents in $L$ rises to the level of $N_{r+1}$. That is, an unraised group enters $B$.
\end{enumerate}
If neither of these conditions hold, then $B$ always contains only raised groups, and $L$ always holds only unraised groups. It is clear then that we can perform transfers until the spending gap between $B$ and $L$ closes entirely and we have a pEF allocation. Specifically, this is possible when the algorithm reaches a point where $B$ contains all the raised groups and $L$ contains all the unraised groups (see Line 12). We show that is guaranteed for some $r^*$, $1 \leq r^* \leq R-1$.

Suppose we raise only the single group $N_1$. It cannot be that Condition 1 holds here, as $L$ rising to the level of $N_1$ implies that the allocation is pEF. It must be that we have Condition 2, or we would be done as the allocation will be pEF and the algorithm would halt.

Now suppose inductively that we have raised the first $r$ groups and found that the allocation is not pEF and Condition 1 is impossible for the first $r$ groups, i.e., Condition 2 holds. Then, it must also be that Condition 1 is impossible if we raise the first $r+1$ groups if the allocation is still not pEF. Let $\sigma = s(N_{r+1})$ be the per-agent spending of $N_{r+1}$ just before $N_{r+1}$ undergoes price-rise. Since raising $r$ groups results in Condition 2, we know that $s(B)$ falls to $\sigma$ \textit{before} $s(L)$ rises to $\sigma$. Surely then, in raising $r+1$ groups, $s(B)$ falls to $\sigma \cdot k$ before $s(L)$ rises to $\sigma \cdot k$, as $\sigma \cdot k > \sigma$. This shows Condition 1 cannot hold after raising the first $r+1$ groups, given that the allocation is not yet pEF.

Finally, suppose that we have raised the first $R-1$ groups, i.e., all but the last group, and the allocation is not pEF. Then, Condition 2 cannot hold, as $s(B)$ falling to the level of $s(N_R)$ implies that the allocation is pEF. Thus, if the allocation is not pEF, Condition 1 must hold, but this is impossible by our inductive argument. Hence it must be that for some $r^*$, $1 \leq r^* \leq R-1$, a pEF allocation was obtained. 

Following this logic, Algorithm~\ref{alg:efpo} checks each possible $r$ until $r^*$ is found and a pEF allocation is obtained. Further, since all chore transfers are along mBB, the allocation remains fPO throughout. Thus the algorithm computes a pEF+fPO allocation. 

For each $r \in [R]$, the algorithm performs at most $O(n^2)$ transfers before adding an agent group to either $B$ or $L$. A group cannot be added to both $B$ and $L$, nor can it leave $B$ or $L$ after being added. Thus, since $R \leq n$, the total number of transfers performed to check if some $r$ is correct is $O(n^3)$, so checking all possible $r$ takes $O(n^4)$ transfers. Hence, Algorithm~\ref{alg:efpo} terminates in strongly polynomial-time, proving Theorem~\ref{thm:divisible}.

\subsection{Pseudocode Description}
We describe the details of the pseudocode of Algorithm~\ref{alg:efpo}. 

As described in Section~\ref{sec:divisible-approach}, Algorithm~\ref{alg:efpo} iteratively searches for a right value $r^*$ of the number of groups to raise. This corresponds to the for loop starting in Line 1. Line 2 calls Algorithm~\ref{alg:MakeInitGroupsDiv} to obtain initial groups as described in Section~\ref{sec:divisible-initcomps}. 

Lines 4-5 initialize the pool of biggest spenders $B$ to $N_1$ and set the index of the lowest group in $B$, $\beta$, to 1. Lines 6-7 initialize the pool of least spenders $L$ to $N_R$ and set the index of the lowest group in $L$, $\lambda$, to $R$.

The loop starting in Line 8 is active while the pool of biggest spenders doesn't contain an unraised component ($\beta \le r$) and the pool of least spenders doesn't contain a raised component ($\lambda \ge r+1$). We then want to decide how much spending (chores) to \textit{drain} from $B$ to $L$. As described in Section~\ref{sec:divisible-discussion}, either the per-agent spending of $B$ can climb down to the per-agent spending of $N_{\beta+1}$, or the per-agent spending of $L$ can climb up to the per-agent spending of $N_{\lambda-1}$. Line 9 computes the total amount of spending $d_B$ to be transferred in the former case and Line 10 computes the total amount of spending $d_L$ to be transferred in the latter case.

The loop starting in Line 11 considers agents $b\in B$ one by one and computes an appropriately priced bundle of chores that should be taken away from them. If $B$ contains all the raised groups and $L$ contains all the unraised groups, i.e., if $\beta = r$ and $\lambda = r+1$ (Line 12), then we are a step away from obtaining a pEF allocation. We compute the average spending of all agents, $q$, in Line 13. This is the amount to which we want to bring every agent's spending. Hence, for each agent $b$ in consideration, we compute a subset $C\subseteq \x_b$ with value $s(B)-q$ (Line 14), and distribute it evenly to all agents in $L$ (Lines 20-21). Here, $\frac{1}{|L|}C$ denotes a (fractional) bundle of chores $C'$ where $C'_{ij} = \frac{1}{|L|}C_{ij}$ for agent $i$ and chore $j$ (revisit Section~\ref{sec:prelim} for the definition of a fractional allocation).

Otherwise, if $d_B \ge d_L$, then the per-agent spending of $L$ reaches the per-agent spending of $N_{\lambda-1}$ before the per-agent spending of $B$ reaches the per-agent spending of $N_{\beta+1}$. Hence each agent $\ell \in L$ must obtain an additional spending of $\frac{d_L}{|L|}$ from agents in $B$. We do this by computing a subset $C$ of $\x_b$ of value $\frac{d_L}{|B|}$ from every $b\in B$ (Line 16) and giving a $\frac{1}{|L|}$-sized fraction of $C$ to $\ell$ (Line 20-21). Thus, $\ell$ obtains a spending $\frac{1}{|L|}\cdot\frac{d_L}{|B|}$ from every agent in $B$, which totals to $\frac{d_L}{|L|}$, the required amount. Now, we update the pool of least spenders to include $N_{\lambda-1}$, and update $\lambda$ (Lines 25-26).

Analogously, when $d_B < d_L$, the per-agent spending of $B$ reaches the per-agent spending of $N_{\beta+1}$ before the per-agent spending of $L$ reaches the per-agent spending of $N_{\lambda-1}$. Hence, each agent $b\in B$ loses an amount equalling $\frac{d_B}{|B|}$. For each $b$, we compute a set $C$ with spending $\frac{d_B}{|B|}$ (Line 18), which is then transferred equally among all agents in $L$ (Line 20-21). We also update the pool of biggest spenders to include $N_{\beta+1}$, and update $\beta$ (Lines 28-29). 

We terminate if the allocation is pEF (Lines 22-23). If not, then we continue our search for a right value $r^*$ by incrementing $r$ (Line 1). The arguments of Section~\ref{sec:divisible-discussion} show that for some $1\le r \le R-1$, Algorithm~\ref{alg:efpo} will terminate with a pEF+fPO allocation.

\section{Discussion}
In this paper, we presented a strongly polynomial-time algorithm for computing an EF1+fPO allocation of indivisible chores to agents with bivalued preferences, constituting the first non-trivial result for the EF1+PO problem for chores. Our algorithm is novel and relies on several involved arguments. Given that the general case is a challenging open problem, we believe extending our algorithm and its analysis to the class of $k$-ary chores is an interesting and natural next step. Another interesting question is whether we can compute an EFX allocation in this setting. We also presented a strongly polynomial-time for computing an EF+fPO allocation of divisible bivalued chores. Computing an EF+fPO allocation of divisible $k$-ary chores in polynomial-time is also a compelling direction for future work.

\appendix
\setcounter{secnumdepth}{2}
\renewcommand{\thesection}{\Alph{section}}
\setcounter{section}{0}

\section{Missing Proofs from Section~\ref{sec:indivisible}}\label{app:indivisible}
\inittermination*
\begin{proof} 
Consider the time needed to make the component of the big spender pEF1. We show that this takes time $\poly{n, m}$. Note that when the identity of the big spender $b$ does not change, a transfer of chore $j$ strictly increases the level of $j$. Thus, while the big spender $b$ is unchanging, chore $j$ can be transferred at most $n$ times. Thus, the total number of chore transfers with the same big spender is at most $mn$. 

We now bound the total number of times the identity of the big spender may change. Let $\x^t$ be the allocation at iteration $t$, and $\p$ be the price vector. Furthermore, let $j_{(i, t)}$ denote the highest-priced chore of agent $i$ at time $t$. 

We claim that if an agent $b$ ceases to be the BS at iteration $t$ and subsequently becomes the BS again at $t'$, then her spending (up to removal of one chore) strictly goes down, i.e., $\p(\x^{t'}_b \setminus j_{(b, t')}) < \p(\x^{t}_b \setminus j_{(b, t)})$. Note that $b$ must lose some chore at time $t$ to cease being the BS. Then, since the spending of $b$ up to one chore at time $t'$ is weakly greater than at time $t$, $b$ must also have gained a chore between $t$ and $t'$. Let the most recent time $b$ gained a chore be $t''\in (t,t')$, and let $b'$ (resp. $b''$) be the BS at time $t'$ (resp. $t''$). Note that the spending of the big spender up to one chore weakly decreases with each iteration, so $\p(\x^{t'}_{b'} \setminus j_{(b', t')}) \leq \p(\x^{t''}_{b''} \setminus j_{(b'', t'')}) \leq \p(\x^{t}_{b} \setminus j_{(b, t)})$. Since $b$ receives a chore $j$ at time $t''$, $b$ must be envied by $b''$ at time $t''$ (Line 6). This gives us $\p(x^{t''}_b) < \p(\x^{t''}_{b''} \setminus j_{(b'', t'')})$. As $j$ is the last chore gained by $b$, we have $\p(\x^{t'}_{b} \setminus j_{(b, t')}) \leq \p(\x^{t'}_b \setminus j) \leq \p(\x^{t''}_b) < \p(\x^{t''}_{b''} \setminus j_{(b'', t'')}) \leq \p(\x^{t}_b \setminus j_{(b, t)})$, thus proving the claim.

Thus, if an agent $i$ ceases to be the BS at time $t$ and later becomes the BS again at $t'$, the spending of $i$ is strictly smaller at $t'$ than at $t$. It follows that $i$ has strictly smaller disutility at $t'$ since the prices have not changed. In any allocation $\x$, if $s_i$ (resp. $t_i$) is the number of chores in $\x_i$ with cost $b$ (resp. $a$) by $i$, the disutility of $i$ is $u_i = s_i  + t_i k$. Since $0\le s_i,t_i \le m$, the number of different disutility values $i$ can have in any allocation is at most $O(m^2)$. Thus, for any agent $i$, the number of times her disutility decreases is at most $O(m^2)$. Thus, an agent can become the big spender only $O(m^2)$ times. Hence, the number of identity changes of the BS is at most $O(m^2n)$.

Thus, the time needed to make the component of the big spender pEF1 is $O(m^3n^2)$. Since we need construct at most $n$ pEF1 components, the total time needed is $O(m^3n^3)$. In conclusion, Algorithm~\ref{alg:MakeInitGroups} terminates in $\poly{n,m}$ time.
\end{proof}

\spendingofLS*
\begin{proof} 
The spending of the LS cannot decrease during a price-rise step. Suppose $t$ is a chore transfer step and $s$ is the spending of the LS at $t$. If $t\le T$, transfers are from the BS $b$ to the LS $\ell$. Further, this happens only if $b$ pEF1-envies $\ell$. Thus, $b$ cannot spend less than $s$ after the chore transfer due to the pEF1 condition. If $t>T$, chore transfers are along alternating paths $b \rightarrow j' \rightarrow i \rightarrow j \rightarrow \ell$, where $j'\in \x^t_{b}\cap \mBB_i$ and $j\in \x^t_i\cap \mBB_\ell$, and $b$ pEF1-envies $\ell$. As discussed in Section~\ref{sec:aftert}, the spending of $i$ does not change during this chore transfer, hence $i$ cannot spend less than $s$ after the transfer. As before, $b$ cannot spend less than $s$ after the transfer due to the pEF1 condition. Thus, the spending of the least spender does not decrease in the run of Algorithm~\ref{alg:ef1po}.
\end{proof}

\compsarepEFone*
\begin{proof}
We prove this lemma in two parts: before $T$, and after $T$, the latter is proved in Lemma~\ref{lem:afterTinvariant} (iii). For the run of the algorithm prior to $T$, we proceed by induction. As the base case, we know from Lemma~\ref{lem:initcompprop} that the groups $\{N_r\}_{r\in[R]}$ are each pEF1 at the end of $\s{MakeInitGroups}$. As the inductive hypothesis, assume that each group is pEF1 at time-step $t$. After the event at $t$, a group can cease to be pEF1 potentially because of a (i) chore transfer, or (ii) price-rise. 

\begin{enumerate}
\item \textit{Chore transfer step}. From the algorithm it is clear that any chore transfer prior to $T$ happens from the big spender $b$ to $\ell$. Let $N_r$ be a group. We consider four cases:
\begin{itemize}
\item Both $b$ and $\ell$ lie outside $N_r$, then the bundles of agents in $N_r$ are unaffected, and the allocation continues to remain pEF1 after the transfer.
\item Both $b$ and $\ell$ lie in $N_r$. However, if this happens then the algorithm terminates with a pEF1 allocation, since $N_r$ is pEF1 at time-step $t$ by assumption.
\item $b$ lies in $N_r$ and $\ell$ does not. Let $i\neq b$ be an agent in $N_r$ and $(\x,\p)$ be the allocation at $t$. Since $N_r$ is pEF1, we have $\p(\x_i) \ge \p(\x_b \setminus j)$ for some $j\in \x_b$. Since $N_r$ is pEF1, $b$ does not pEF1-envy $i$ at $t$. Since $b$ loses a chore at $t$, $b$ does not pEF1-envy $i$ after the transfer. Further, since $b$ is the big spender in $H_r$, $\p(\x_i \setminus j') \le \p(\x_b \setminus j)$ for some $j'\in \x_i$. Thus, $\p(\x'_i \setminus j') \le \p(\x'_b)$, where $\x'$ is the allocation  after the transfer. Hence, $i$ does not pEF1-envy $b$ after the transfer. Thus, $N_r$ remains pEF1 after the transfer.
\item $\ell$ lies in $N_r$ and $b$ does not. Let $i\neq \ell$ be an agent in $N_r$ and $(\x,\p)$ be the allocation at $t$. Since $N_r$ is pEF1, we have $\p(\x_i \setminus j) \le \p(\x_\ell)$ for some $j\in \x_i$. Since $\ell$ gains a chore $j'$ at time $t$, the allocation $\x'$ after the event at $t$ satisfies $\p(\x'_\ell \setminus j') = \p(\x_\ell) \le \p(\x_i) = \p(\x'_i)$. Hence $\ell$ does not pEF1-envy $i$ after the transfer. Further, since $\ell$ gains an additional chore and $N_r$ is pEF1 at $t$, $i$ will not pEF1-envy $\ell$ after the transfer. Thus, $N_r$ remains pEF1 after the transfer.
\end{itemize}

\item \textit{Price-rise step}. Suppose $N_r$ undergoes a price-rise at $t$. At $t$, $N_r$ is pEF1 by assumption. Since the prices of all chores in $N_r$ are raised by the same factor, it continues to be pEF1 after the price-rise.
\end{enumerate}
In both cases, we see that $N_r$ continues to remain pEF1 after the event at $t$. We conclude by induction that each group remains pEF1 throughout the run of the algorithm.
\end{proof}

We denote by $(\x^{t},\p^t)$ the allocation and price vector at time-step $t$. 
\BSincompofformerLS*
\begin{proof}
Let $t$ be the first time-step when the big spender (BS) $b$ lies in a group $N_r$ which also contains an agent $i$, who was a least spender (LS) most recently at time-step $t'\le t$. If $t'= t$, then $N_r$ contains both the LS and the BS, in which case the allocation must be pEF1 due to Lemma~\ref{lem:compsarepEF1}. Thus, $t' < t$.

Suppose $N_r$ underwent a price-rise at a time-step $t''$ between $t'$ and $t$. Then the BS $h$ at $t''$ must be in $N_r$. If that happens, then at $t''$, the big spender lies in a group which also contains a former LS $i$. Since $t'' < t$, this contradicts the definition of $t$. Thus, $N_r$ does not undergo a price-rise between $t'$ and $t$. We next show that:
\begin{claim}\label{clm:BSnotformerLS}
$b$ cannot be a former least spender, unless the allocation is already pEF1.
\end{claim}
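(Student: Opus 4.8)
The plan is to argue by contradiction: suppose the current big spender $b$ is a former least spender, yet the allocation at $t$ is not pEF1. Let $t_b < t$ be the last time-step at which $b$ is the least spender, and set $P = \p^{t_b}(\x^{t_b}_b)$, which is exactly the least-spender spending at $t_b$. Since $b$ stops being the least spender after $t_b$, Lemma~\ref{lem:algobs} forces $b$ to have received a single chore $j^*$ at $t_b$ (an agent ceases to be the least spender only by receiving a chore). I would first bound $b$'s spending up to one chore immediately after this receipt. Writing $S = \x^{t_b}_b$ with $\p(S) = P$, the new bundle is $S \cup \{j^*\}$, and a short case split on whether $p_{j^*}$ is a largest price in $S \cup \{j^*\}$ shows $\min_{j \in S\cup j^*} \p((S\cup j^*)\setminus j) \le P$: if $p_{j^*} \ge \max_{c\in S} p_c$ removing $j^*$ leaves $\p(S) = P$, and otherwise removing the largest chore of $S$ leaves $P + p_{j^*} - \max_{c\in S} p_c < P$.

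Next I would use the failure of pEF1. By Lemma~\ref{lem:BSenviesLS}, if the allocation at $t$ is not pEF1 then the big spender $b$ pEF1-envies the least spender $\ell$, i.e. $\min_{j\in \x^t_b} \p^t(\x^t_b \setminus j) > \p^t(\x^t_\ell)$. By Lemma~\ref{lem:spendingofLS} the least-spender spending never decreases, so $\p^t(\x^t_\ell) \ge P$, which gives $\min_{j\in \x^t_b} \p^t(\x^t_b \setminus j) > P$. The goal is then to contradict this by showing that $b$'s spending up to one chore can only have weakly \emph{decreased} between $t_b$ and $t$.

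The crux, and the step I expect to be the main obstacle, is ruling out that a price-rise of $b$'s (fixed) group $N_r$ inflated $b$'s spending between $t_b$ and $t$. Here I would invoke the minimality of $t$: a price-rise of $N_r$ occurs only when $N_r$ contains the big spender (Lemma~\ref{lem:algobs}), and since $N_r$ already contains $b$, which is a former least spender at every time after $t_b$, any such price-rise at a time strictly between $t_b$ and $t$ would itself be a moment at which the big spender lies in a group containing a former least spender, contradicting that $t$ is the \emph{first} such time. A price-rise occurring exactly at $t$ does not affect the state at $t$ (read just before the event), and the event at $t_b$ is the chore-receipt above rather than a price-rise. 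Hence the prices of all chores $b$ holds are unchanged from just after $t_b$ through $t$. Since agents receive chores only while they are the least spender (Lemma~\ref{lem:algobs}) and $b$ is never the least spender after $t_b$, the bundle of $b$ only loses chores over this interval. Both losing a chore and keeping prices fixed weakly decrease the quantity $\min_{j\in \x_b} \p(\x_b \setminus j)$, so $\min_{j\in \x^t_b} \p^t(\x^t_b \setminus j) \le P$, contradicting the strict inequality of the previous paragraph. This contradiction establishes the claim.
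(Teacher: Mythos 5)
Your proof is correct and takes essentially the same route as the paper's: both pin down the chore received at the last time $b$ was the least spender, rule out any price-rise of $b$'s group on the interval up to $t$ via the minimality of $t$ (using Lemma~\ref{lem:algobs}), and combine the monotonicity of the least spender's spending (Lemma~\ref{lem:spendingofLS}) with the fact that $b$ can only lose chores afterwards to bound $\min_{j\in\x^t_b}\p^t(\x^t_b\setminus j)$ by $P$. The only cosmetic differences are your contradiction framing and the explicit case analysis on which chore to remove, where the paper simply removes the received chore $j$ itself using $\x^t_b\subseteq\x^{t_0}_b\cup\{j\}$.
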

\begin{proof}
For sake of contradiction, assume $b$ was an LS most-recently at time-step $t_0 < t$. From the properties of the algorithm (Lemma~\ref{lem:algobs}), $b$ ceased to be an LS at $t_0$ because she received a chore $j$. Subsequent to this, $b$ cannot receive another chore, since $t_0$ is the most recent time-step when $b$ was an LS, and only an LS can receive a chore (Lemma~\ref{lem:algobs}). Thus we must have $\x_b^t \subseteq \x_b^{t_0}\cup \{j\}$. Further, $b\in H_r$ does not undergo price-rise between $t_0$ and $t$, otherwise it would contradict the definition of $t$. Hence we obtain: $\p^t(\x^t_b \setminus j) \le \p^{t_0}(\x^{t_0}_b)$. 

Let $\ell$ be the LS at $t$. From Lemma~\ref{lem:spendingofLS}, we know $\p^t(\x^t_\ell) \ge \p^{t_0}(\x^{t_0}_b)$, since $b$ was the LS at $t_0$. Putting it together, we get $\p^t(\x^t_\ell) \ge \p^t(\x^t_b\setminus j)$, i.e., $(\x,\p)$ is pEF1 at $t$, since $b$ is the BS at $t$.
\end{proof}
 
With Claim~\ref{clm:BSnotformerLS} in hand, we can now prove Lemma~\ref{lem:BSincompofformerLS}. Since $b$ is not a former LS, $b$ has not received a chore. Also, $b$ has not undergone a price-rise. Hence: $\p^t(\x^t_b\setminus j) \le \p^{t'}(\x^{t'}_b\setminus j') \le  \p^{t'}(\x^{t'}_i)$, for some $j\in \x^t_b$ and $j'\in \x^{t'}_b$, since $H_r$ is pEF1 at $t'$ by Lemma~\ref{lem:compsarepEF1}. Finally, by Lemma~\ref{lem:spendingofLS}, we get $\p^{t'}(\x^{t'}_i)\le \p^t(\x^t_\ell)$, where $\ell$ is the LS at $t$. Putting these together, we get: $\p^t(\x^t_\ell) \ge \p^t(\x^t_b\setminus j)$, showing that $(\x,\p)$ is pEF1 at $t$.
\end{proof}

\aboveHr*
\begin{proof}
Let $t \le T$ be the time-step at which $N_r$ was raised, i.e., the most recent price-rise. Let $h$ be the big spender at $t$. We know that $h\in H_r$. Let $i$ be an arbitrary agent in $N_{<r}$.

Since $i$ is in a raised group $N_{r'}$, where $r' < r$, $i$ cannot have gained a chore before $T$. Let $t'<t$ be the time at which $N_{r'}$ was raised. Then:
\begin{itemize}
    \item if $i$ gained a chore before $t'$, then $N_{r'}$ cannot have been undisturbed at $t'$.
    \item if $i$ gained a chore at time $t'' \in (t',T)$, then $i$ had to have been the LS at $t''$ (Lemma~\ref{lem:algobs}). Hence at $t''<T$, the LS was in a raised group. This contradicts the definition of $T$.
\end{itemize}

Hence $i$ can only lose chores. Just after $t'$, all chores of $i$ are priced at $k$. Hence the spending of $i$ after $t'$ is always a multiple of $k$. Further, until $t$, $h$ is undisturbed, and contains only chores of price 1. Let $c = \p^t(\x^t_h)$ be the number of (price 1) chores initially assigned to $h$.

Note that $N_{r'}$ must contain an agent who initially has at least $c$ price 1 chores, since $N_{r'}$ undergoes price-rise before $N_r$. Since each group is pEF1, $i$ must have at least $c-1$ price 1 chores. Hence just after $t'$, we have $\p^{t'}(\x^{t'}_i) \ge k(c-1)$. After $t'$, the spending of $i$ potentially drops because of transfer of chores of price $k$.

At $t$, since $h$ is the BS, we know: $\p^t(\x^t_i)-k \le \p^t(\x^t_h)-1$. Since the spending of $i$ is a multiple of $k$, we have:
\[\p^t(\x^t_i) \le k + k\cdot\left\lfloor \frac{\p^t(\x^t_h)-1}{k} \right\rfloor = k + k\cdot\left\lfloor \frac{c-1}{k} \right\rfloor\]

Just before the time before $t$ that $i$ lost her last chore, since $h$ is not the BS, the spending of $i$ has to be at least $k + (c-1)$. Hence the spending of $i$ at $t$ is $gk$ where:
\begin{itemize}
    \item if $k | (c-1)$, then $g\in \{ (c-1)/k, 1+(c-1)/k\} $
    \item otherwise, $g = 1+\lfloor (c-1)/k \rfloor$
\end{itemize}

Since $i$ is an arbitrary agent in $N_{<r}$, we see that every agent in $N_{<r}$ spends almost the same amount (up to $k$). Hence there is no pEF1-envy among agents of $N_{<r}$ at $t$. From $t$ to $T$, agents in $N_{<r}$ only lose chores, and the gap between the spendings of any two agents cannot ever exceed $k$. Hence at $T$, agents in $N_{<r}$ do not pEF1-envy each other. Since the LS is in $N_{<r}$ at $T$, we conclude that agents in $N_{<r}$ do not pEF1-envy any other agent.
\end{proof}

We see from the previous analysis that $\p^t(\x^t_i) \ge c-1$, where $c = \p^t(\x^t_h)$.

\belowHr*
\begin{proof}
Let $t \le T$ be the time-step at which $N_r$ was raised, i.e., the most recent price-rise. Let $i$ be an arbitrary agent in $N_{>r}$. Since $i$ is in an unraised group, $i$ cannot have undergone a price-rise or ever lost a chore. 

Suppose $t'$ is the last time $i$ gained a chore $j$. Then at $t'$, $i$ is a LS. Thus, $\x^{t'}_i \cup \{j\} = \x^T_i$, and  $\p^{T}(\x^{T}_i\setminus j) = \p^{t'}(\x^{t'}_i)$. Since the spending of the LS weakly increases, $\p^{t'}(\x^{t'}_i) \le \p^T(\x^T_{\ell})$, where $\ell$ is the LS at $T$. Hence we obtain: $\p^{T}(\x^{T}_i\setminus j) \le \p^T(\x^T_{\ell})$, showing that $i$ does not pEF1-envy the LS. 

Suppose $i$ did not ever gain a chore. Then, the bundle of $i$ has been undisturbed from the start. Since $h$ is the BS at $t$, we have $\p^T(\x^T_i\setminus j') = \p^t(\x^t_i\setminus j') \le c-1$, for some $j'\in \x^T_i$. We know that $\ell$ spends at least $c-1$ at $t$. We know $\ell$ does not gain a chore. We show that it is not possible for $\ell$ to lose a chore after $t$, unless the allocation is already pEF1.

Suppose $\ell$ loses a chore $j$ for the last time at $t'\in(t,T)$. At $t'$, $\ell$ was the BS. Since there are no price-rises after $t$, the spending of the BS up to the removal of one chore weakly decreases. Hence $\p^T(\x^T_b\setminus j') \le \p^{t'}(\x^{t'}_\ell\setminus j)$, where $b$ is the BS at $T$. Since $\ell$ doesn't lose a chore after $t'$, $\p^{t'}(\x^{t'}_\ell\setminus j) = \p^{T}(\x^{T}_\ell)$. Thus the allocation is already pEF1. 

If $\ell$ did not lose any chore after $t$, then $\p^{T}(\x^{T}_\ell) = \p^{t}(\x^{t}_\ell) \ge c-1$. Since $\p^T(\x^T_i\setminus j') \le c-1$, for some $j'\in \x^T_i$ if $i$ has not gained a chore, we see that in this case too $i$ does not pEF1-envy the LS at $T$.

Since $i$ was an arbitrary agent in $N_{>r}$, we conclude that no agent in $N_{>r}$ pEF1-envies any other agent.
\end{proof}

\afterTinvariant*
\begin{proof}
Recall that Lemmas~\ref{lem:aboveHr} and \ref{lem:belowHr} shows that (i) and (ii) hold just after $T$, and Lemma~\ref{lem:compsarepEF1} shows (iii) holds just after $T$. With this as our base case, we proceed by induction. Assume (i)-(iii) hold at some time step $t>T$. The event at $t$ can either be:
\begin{itemize}
\item A direct chore transfer from $b\in N_r$ to $\ell\in N_{>r}$. Such a transfer does not newly cause any agent to pEF1-envy $\ell$ since $\ell$ gains a chore, nor any agent to pEF1-envy $b$ since $b$ is the BS. Also, $b$ does not newly pEF1-envy any agent since she loses a chore, and neither does $\ell$ newly pEF1-envy any agent since she is the LS.
\item A chore transfer of the form $b \rightarrow j' \rightarrow i \rightarrow j \rightarrow \ell$, where $i\in N_{>r}$, $j'\in \x^t_b$, $j\in\x^t_i$, and $j$ is a chore that $\ell$ lost prior to $T$. Hence $p^t_j = p^t_{j'} = k$. Hence, $\p^{t+1}(\x^{t+1}_i) = \p^t(\x^t_i)$. Thus, this step effectively amounts to transferring a chore from $b$ to $\ell$. As argued in the previous paragraph, this does not cause any agent to newly start pEF1-envying another agent. 
\end{itemize}
Hence, by induction, (i)-(iii) hold after $t$ as well.
\end{proof}

\bibliography{references}
\end{document}